\documentclass[sigconf]{acmart}
\copyrightyear{2026}
\acmYear{2026}
\setcopyright{cc}
\setcctype{by}
\acmConference[FormaliSE '26]{IEEE/ACM 14th International Conference on Formal Methods in Software Engineering}{April 12--13, 2026}{Rio de Janeiro, Brazil}
\acmBooktitle{IEEE/ACM 14th International Conference on Formal Methods in Software Engineering (FormaliSE '26), April 12--13, 2026, Rio de Janeiro, Brazil}
\acmDOI{10.1145/3793656.3793687}
\acmISBN{979-8-4007-2478-7/2026/04}

\usepackage{url}
\hypersetup{
    colorlinks,
    linkcolor={red!40!black},
    citecolor={blue!50!black},
    urlcolor={blue!80!black}
}

\usepackage{amsthm}
\usepackage{cleveref}

\usepackage{subcaption}
\definecolor{Lavender}{rgb}{0.9, 0.9, 0.98}

\theoremstyle{definition}
\newtheorem{theorem}{Theorem}
\newtheorem{definition}{Definition}

\usepackage{algorithm}
\usepackage[noend]{algorithmic}
\setlength{\algorithmicindent}{0.75em}
\newlength{\procedureindent}
\setlength{\procedureindent}{0.75em}
\newif\ifprocedureindent
\makeatletter
\@ifundefined{theALC@line}{}{%
  \renewcommand{\theALC@line}{\textcolor{gray}{\scriptsize\arabic{ALC@line}}}%
}
\makeatother
\newcommand{\State}{\STATE}
\newcommand{\Procedure}[2]{\STATE \textbf{procedure} #1(#2)\addtolength{\algorithmicindent}{\procedureindent}\begin{ALC@g}}
\newcommand{\EndProcedure}{\end{ALC@g}\addtolength{\algorithmicindent}{-\procedureindent}}
\newcommand{\ForAll}[1]{\FORALL{#1}}
\newcommand{\EndFor}{\ENDFOR}
\newcommand{\While}[1]{\WHILE{#1}}
\newcommand{\EndWhile}{\ENDWHILE}
\newcommand{\If}[1]{\IF{#1}}
\newcommand{\Else}{\ELSE}
\newcommand{\EndIf}{\ENDIF}
\newcommand{\Output}{\textbf{output} }
\newcommand{\Comment}[1]{\hfill{\itshape\textcolor{gray}{$\triangleright$ #1}}}

\DeclareRobustCommand{\code}[1]{%
  \ifmmode
    \text{\textnormal{\texttt{#1}}}%
  \else
    \textnormal{\texttt{#1}}%
  \fi
}

\newcommand{\tool}{Cargo Sherlock}

\newcommand{\findingsbox}[2]{%
  \bigskip
    \noindent
    \setlength{\fboxsep}{6pt}%
    \fcolorbox{black!60}{blue!10}{%
        \parbox{\dimexpr\linewidth-2\fboxsep-2\fboxrule\relax}{\textbf{#1 Findings:} #2}%
    }%
  \smallskip
}

\setlength{\textfloatsep}{\baselineskip}
\setlength{\intextsep}{\baselineskip}
\setlength{\floatsep}{\baselineskip}

\hbadness=10001
\vbadness=10000

\title{\tool{}: An SMT-Based Checker for Software Trust Costs}

\begin{document}

\acmBadgeL[https://doi.org/10.6084/m9.figshare.31094197]{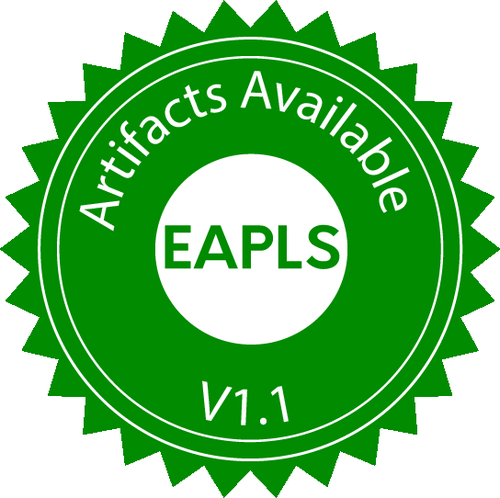}
\acmBadgeL[https://doi.org/10.6084/m9.figshare.31094197]{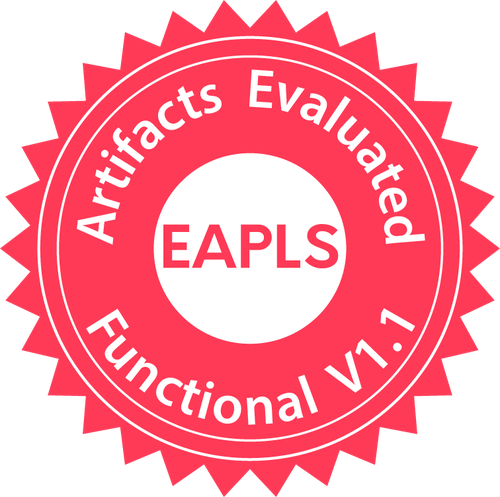}
\acmBadgeL[https://doi.org/10.6084/m9.figshare.31094197]{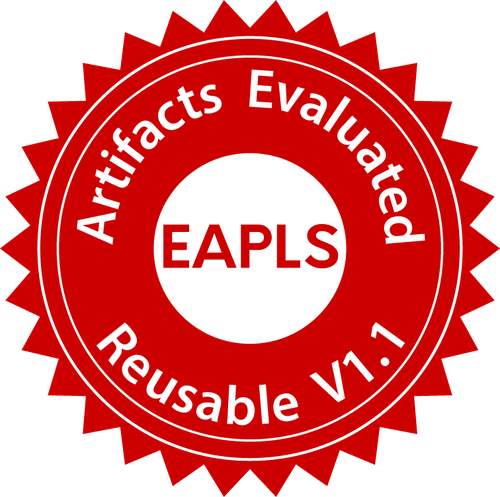}

\author{Muhammad Hassnain}
\orcid{0009-0003-8336-7528}
\affiliation{
    \institution{University of California, Davis}
    \city{Davis}
    \state{CA}
    \country{USA}
}
\email{mhassnain@ucdavis.edu}
\author{Anirudh Basu}
\orcid{0009-0007-5339-9528}
\affiliation{
    \institution{University of California, Davis}
    \city{Davis}
    \state{CA}
    \country{USA}
}
\email{anibasu@ucdavis.edu}
\author{Ethan Ng}
\orcid{0009-0001-8956-4974}
\affiliation{
    \institution{University of California, Davis}
    \city{Davis}
    \state{CA}
    \country{USA}
}
\email{eyng@ucdavis.edu}
\author{Caleb Stanford}
\orcid{0000-0002-8428-7736}
\affiliation{
    \institution{University of California, Davis}
    \city{Davis}
    \state{CA}
    \country{USA}
}
\email{cdstanford@ucdavis.edu}

\renewcommand{\shortauthors}{Hassnain et al.}

\date{\today{}}

\begin{abstract}
Supply chain attacks threaten open-source software ecosystems.
This paper proposes a framework for quantifying trust in third-party software dependencies
that is \emph{formally checkable} -- formalized in satisfiability modulo theories (SMT) --
while at the same time incorporating \emph{human factors}, like the number of downloads, authors, and other metadata that are commonly used to identify trustworthy software in practice.
We use data from software analysis tools and metadata to build a first-order
relational model of software dependencies;
to obtain an overall ``trust cost'' combining these factors, we propose a formalization based on the \emph{minimum trust problem}, which asks for the minimum cost of a set of assumptions that can be used to conclude that the code is safe.
We implement these ideas in \tool{}, targeted for Rust libraries (crates),
incorporating a list of candidate assumptions motivated by quantifiable trust metrics identified in prior work.
Our evaluation shows that \tool{} can identify synthetically generated supply chain attacks and known incidents involving typosquatted and poorly AI-maintained crates.
Its performance scales to Rust crates with many dependencies.
\end{abstract}

\begin{CCSXML}
<ccs2012>
   <concept>
       <concept_id>10011007.10010940.10010992.10010998</concept_id>
       <concept_desc>Software and its engineering~Formal methods</concept_desc>
       <concept_significance>500</concept_significance>
       </concept>
   <concept>
       <concept_id>10002978.10003022</concept_id>
       <concept_desc>Security and privacy~Software and application security</concept_desc>
       <concept_significance>500</concept_significance>
       </concept>
   <concept>
       <concept_id>10002978.10003029</concept_id>
       <concept_desc>Security and privacy~Human and societal aspects of security and privacy</concept_desc>
       <concept_significance>500</concept_significance>
       </concept>
 </ccs2012>
\end{CCSXML}

\ccsdesc[500]{Software and its engineering~Formal methods}
\ccsdesc[500]{Security and privacy~Software and application security}
\ccsdesc[500]{Security and privacy~Human and societal aspects of security and privacy}

\keywords{Formal Methods, Trust in Software, Human Factors in Security, Software Supply Chain}

\maketitle

\section{Introduction}
\label{sec:intro}

As software ecosystems grow in size and complexity,
automated techniques are needed to prevent supply chain attacks~\cite{ohm2020backstabber,martinez2021software}.
Recent and notable examples of such attacks include the SolarWinds attack~\cite{martinez2021software}; the XZ Utils backdoor~\cite{openwallOsssecurityBackdoor,helpnetsecurityBewareBackdoor}; and for Rust, which is the focus of our implementation, the malicious libraries \texttt{rustdecimal}~\cite{rustsecRUSTSEC20220042Rustdecimal} and \texttt{faster\_log}~\cite{fasterlog-asyncprintln}.
These attacks occur at the \emph{package granularity}, where untrustworthy libraries or software dependencies are added to trustworthy projects, or safe libraries are hijacked by malicious developers.
Many existing techniques can be used to help identify such attacks.
Some of these are \emph{formally checkable} in the sense that they are based on a well-defined formal model of software:
e.g., static analysis tools, fuzzers, type systems, and verification tools.
For such tools, analysis results correspond
to concrete syntactic or semantic source code properties or program traces,
and can therefore be understood in an appropriate formal system~\cite{moller2012static}.
However, these tools are often limited to specific semantic or syntactic properties of programs.

On the other hand, industry developers and security engineers typically rely on a rather different set of criteria in practice:
they use a combination of \emph{trust} in software and \emph{auditing} to ensure that third-party dependencies are safe from software supply chain attacks~\cite{hamer2025trusting,mozilla2024cargovet,cargo-scan-paper,jfrogClosingSupply,crev}.
For example, developers at a large company may be likely to trust software that was developed by that company
or one of its industry partners;
at a smaller scale, an independent developer may be more inclined to trust software that is widely accepted by the community (e.g., has a large number of downloads and stars on GitHub).
Trust and auditing are the foundation of techniques like software bill of materials (SBOM)~\cite{muiri2019framing,hendrick2022state,xia2023empirical,zahan2023software},
which require human-curated or manually generated lists of software components and their purposes.
However, trust and auditing (as currently practiced) often come with no formal guarantees;
there is no formal model over which we can understand the meaning of audits and trust relationships,
and how they compose.

\paragraph{\textbf{Research goals.}}
Is it possible to quantify trust in a way that is formally checkable (in some formal system) yet
also accounts for human factors that many developers rely on in practice?
Our specific focus is to quantify trust in propositional and first-order logic;
roughly speaking, the more assumptions that must be taken for a software dependency ``on faith'', the greater the \emph{trust cost} that we should need to pay in order to believe it functions correctly.
It should be the case that higher trust costs correspond to more risky code.
We identify the following specific goals for such a trust cost:
\begin{itemize}
\item
Trust should be \emph{specific}: a trust cost should commit the analysis to a specific
assumption or set of assumptions about the software
(e.g., the number of downloads, trusted audits, or the results of specific analysis tools).
\item
Trust should be \emph{auditable}: it should be possible for a human to read and understand
the assumptions made by a trust analysis.
\item
Trust should be \emph{configurable}: the user should be able to add
and revise assumptions, e.g., by adding additional tool outputs or human factors as parameters.
\item
Finally, trust should be \emph{composable}: we should be able to compare and combine trust
costs corresponding to different portions of the package ecosystem.
\end{itemize}

We note that traditional software vulnerability analysis and other machine learning (ML)
approaches~\cite{krsul1998software,ghaffarian2017software} are a non-solution to the above.
Such ML techniques would meet our criteria for configurability and composability;
but they would fail the first and second criteria, auditability and specific assumptions that can later
be vetted by human developers and human users.

\paragraph{\textbf{\tool{}: an SMT-based approach.}}
To address these goals, we propose \tool{}, a design and implementation for quantifying supply chain trust
based on a formalization in satisfiability modulo theories (SMT).
\tool{} can be used directly by end users to quantify trust when selecting dependencies to add to a project, and its features are motivated by this intended use case; see \Cref{fig:trust_anyhow} and \Cref{sec:overview}.

We begin with an SMT-based formalization of trust that we call the \emph{minimum trust problem} (\Cref{sec:model}). This particular formalization assumes that we have a list of assumptions that the user might make about the code (such as: ``if the code has a large number of downloads, then it is safe'' or ``if the code is audited by a trusted organization, then it is safe''); each assumption has an associated cost, which can be thought of as the cost inherent in making that assumption.
These assumptions are motivated by prior empirical research;
importantly, they are not necessarily true facts (if they were true unconditionally then trust would not be required!), but rather, they are real thought processes that users might potentially employ to conclude that
dependencies are safe, and are configurable by users.
The minimum trust problem then is to determine the \emph{minimum cost} of a set of assumptions, encoded in propositional logic, that logically entail that the code is safe.
We argue that this model can meet the goals we outlined above; for example, it produces trust costs
that are specific to a particular set of assumptions, can be configured to meet specific users' needs, and are
auditable by human developers.
We consider two algorithms for solving the minimum trust problem, based on a reduction to SMT.
We also show that the problem is NP-hard in the worst case.
This formalization is similar in principle to existing work from the logic programming and databases communities,
for example, the semiring framework for database provenance~\cite{green2017semiring},
assigning weights to logical conclusions in Datalog~\cite{bistarelli2008weighted},
and solvers for weighted Max-SAT~\cite{heras2008minimaxsat,li2009maxsat}.

For the design and implementation (\Cref{sec:impl}), \tool{} gathers software supply chain metadata and the results of automated program analysis tools to build a \emph{first-order relational structure} of the software supply chain -- that is, of all the dependencies of a given package. It then instantiates the minimum trust problem above on a set of possible assumptions on the variables involved in this first-order structure. The implementation also supports ``parameterized assumptions'', where the cost depends in a continuous way on a particular piece of metadata, and ``negative assumptions'', which can be used to incorporate information about software components that are \emph{distrusted}, such as known vulnerabilities or bugs found by the program analysis tools.

Our evaluation (\Cref{sec:eval}) asks four questions:
Can \tool{} be used to prevent synthetically generated supply chain attacks?
Would \tool{} identify risky Rust dependencies (crates) that have been involved in known supply chain incidents?
How does \tool{} fare on known vulnerable crates?
Lastly, does the performance of \tool{} scale to Rust crates with many dependencies?
Among our results, we find that synthetically generated typosquatting attacks generally result in
much higher trust costs.
We also find that of the two algorithms we propose, our more optimized algorithm provides timely
results for crates with a large number of dependencies.

One way to put these results into context is to think about a particular \emph{user model} of how we expect users to interact with our tool.
For example, due to the higher trust costs, if using our tool in the prescribed model,
a user would catch any of the typosquatting attacks considered in our evaluation by identifying and omitting the risky dependency prior to its inclusion.

\paragraph{\textbf{Contributions in summary:}}
\begin{itemize}
\item
A formal model of the supply chain trust problem and two algorithms for its solution
using the SMT theory of linear arithmetic (\Cref{sec:model});
\item
A design and architecture for a usable tool based on these ideas, including concrete input data about Rust code and a collection of built-in candidate assumptions motivated by human factors and available program analyses (\Cref{sec:impl});
\item
An evaluation of the efficacy of our tool for identifying synthetic and real-world supply chain attacks,
and its scalability on Rust libraries with many dependencies (\Cref{sec:eval}).
\end{itemize}

We conclude the paper with related work (\Cref{sec:rw}).
\tool{} is open source (MIT license) and available via a virtual machine image on Figshare\footnote{\url{https://doi.org/10.6084/m9.figshare.31094197}} and as a repository on GitHub.\footnote{\url{https://github.com/davispl/cargo-sherlock}}

\section{Motivating Overview}
\label{sec:overview}

\begin{figure}[t]
  \centering
  \includegraphics[width=\columnwidth]{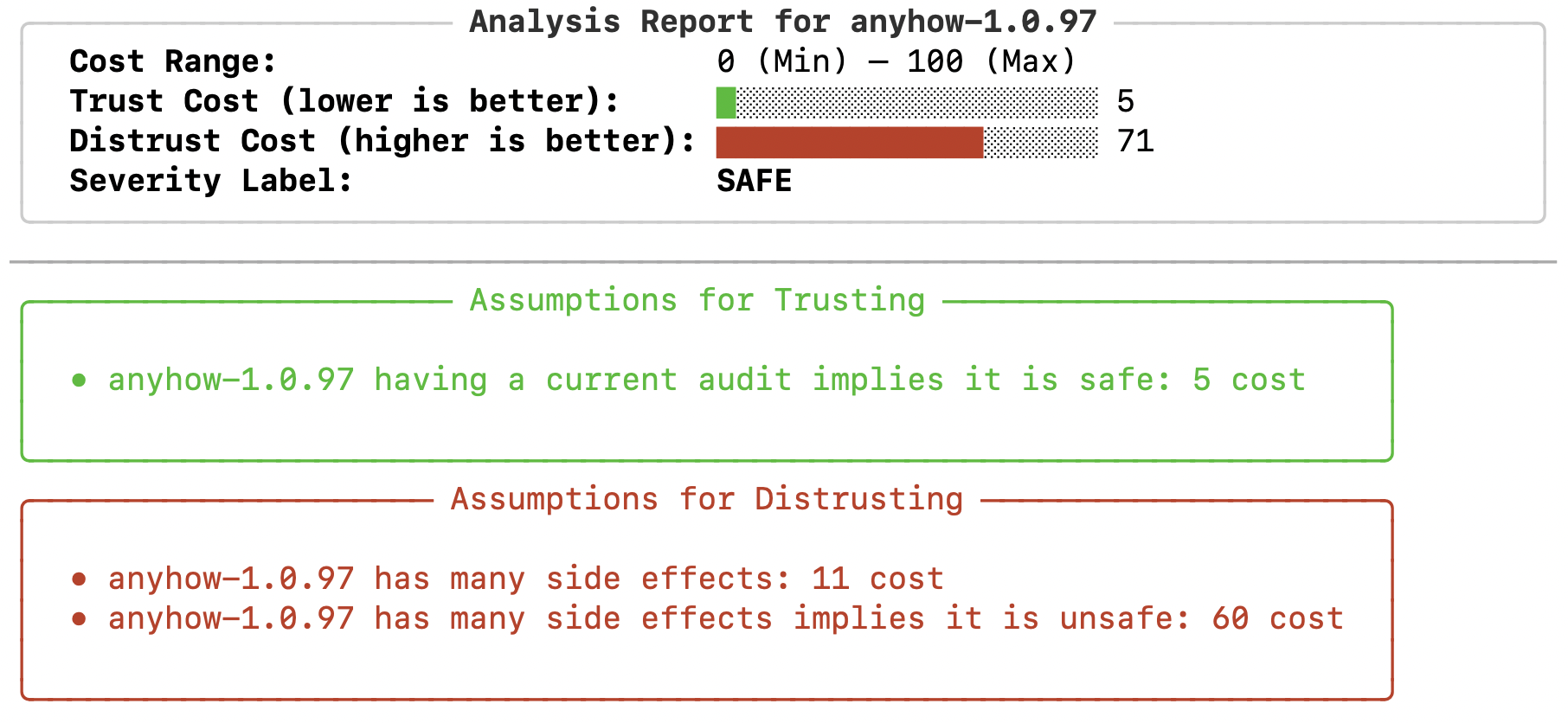}
  \caption{\tool{} example output on the \texttt{anyhow} crate. (Note that lower trust cost, higher distrust cost is better.)
  }
  \Description{Example output of \tool{} on the \texttt{anyhow-1.0.97} crate, showing that it is marked as \texttt{SAFE} due to presence of an audit.}
  \label{fig:trust_anyhow}
\end{figure}

This section presents a motivating overview of \tool{} from a user-facing perspective;
the reader who is more interested in formal details may skip
to \Cref{sec:model}.
We consider a hypothetical example of Ferris, a Rust developer who wishes to install a Rust crate
called \verb|aho_corasick| (implementing the Aho-Corasick algorithm),
and explore how \tool{}'s approach might help Ferris attain the goals laid out in the introduction.

\subsection{Automating trust costs}
After finding the crate \verb|aho_corasick| from a Google search, Ferris faces a choice:
do they trust the crate blindly, or do they first try to determine if the crate is safe to install on their system?
If Ferris is a typical developer, they might employ a common middle ground by relying on various heuristics to
argue that the crate is likely safe:
e.g., that the crate has many downloads, or that it is developed by someone who is trustworthy.
At its core, our tool supports Ferris by attempting to \emph{automate} this reasoning.
The premises of Ferris' arguments are based on \emph{assumptions} about the supply chain ecosystem (e.g., ``the more downloads a crate has, the safer it is'');
\tool{} includes a default list of such assumptions motivated by prior research (see~\Cref{sec:selection_of_assumptions}).
Ferris may find some assumptions more reasonable than others; we mark each assumption with a \emph{cost} to reflect its reasonableness.
If Ferris runs \tool{} on \verb|aho_corasick|, it will present them with \verb|aho_corasick|'s \emph{trust cost}: the total cost of the argument whose premises have the lowest total cost to assume (informally, the \emph{strongest} argument).
After using \tool{}, Ferris determines that \verb|aho_corasick|
is marked as trusted, and that it was marked \texttt{SAFE} due to having a high number of downloads.
See \Cref{fig:trust_anyhow} for an example output;
the trust cost (lower is better) is shown at the top in green.

\subsection{Trust should be specific}
Now suppose that Ferris becomes concerned about trusting crates with a high number of downloads.
To address this concern, Ferris hires an expert auditor, Crabby, who combs through every line of source code of
\verb|aho_corasick|.
Since Ferris trusts Crabby's expertise, Ferris assigns a very low cost to the assumption that
``if the crate is audited by Crabby, it is safe'' -- lower than any other available assumption.
Next, suppose Crabby comes back after the audit and reports that the crate is completely safe to use.
In this scenario, it would seem that the previous assumptions used to mark the crate trusted -- such as its
high number of downloads or the author of the crate -- are no longer relevant to Ferris.
Since \tool{} only reports the \emph{minimum} trust cost for a crate with the associated assumptions,
it does not present other (potentially unhelpful) assumptions.
In other words, Ferris is only presented with a single \emph{specific} reason to trust the crate.
In \Cref{fig:trust_anyhow}, the specific assumptions are shown under ``Assumptions for Trusting'' in green.

\subsection{Trust should be auditable and configurable}
Unfortunately, Ferris later finds out that Crabby is untrustworthy.
Since this renders the existing audit suspect, Ferris revisits the previous assumptions,
and increases the trust cost for ``if the crate is audited by Crabby, it is safe''.
In turn, \tool{} will update its result: the trust cost for \verb|aho_corasick| will increase,
and a new minimum set of assumptions will be identified.
This same mechanism can be used if Ferris wants to add new information to the tool
(e.g., adding additional trusted authors and auditors).

\subsection{Trust should be composable}
During this process, Ferris discovers that \verb|aho_corasick| depends on a second crate, \verb|memchr|,
that was previously overlooked.
Since this could present an additional attack surface, Ferris believes that
the trust cost for \verb|aho_corasick| should be dependent on the cost of its dependencies in some cases.
One way to model this is with assumptions such as the following one:
``if a crate's dependencies are all safe to use, and the crate does not have any code flagged by a static analysis tool, then the crate is safe.''
\tool{} uses a similar assumption to allow arguments that compose trust costs on a crate's dependencies to obtain a trust cost for the crate itself.

\subsection{Incorporating known risks}
Finally, Ferris discovers the RustSec Advisory Database~\cite{rustsec}, which lists vulnerabilities for different Rust crates.
Ferris wants to check whether \verb|aho_corasick| is present in the database.
Unfortunately, nothing we have considered so far accounts for \emph{negative} factors that would decrease trust in a crate; intuitively, this is because we
have only incorporated assumptions that are about trustworthiness.
We thus extend the tool with a \emph{distrust} model to quantify known risks;
we combine the results of the trust and distrust models to obtain a final label for the crate.
In \Cref{fig:trust_anyhow}, the distrust model is shown alongside the trust model with output in red.

\section{Minimum Trust Problem}
\label{sec:model}

In this section, we present the formal model behind \tool{}, which aims to satisfy the requirements from \Cref{sec:intro,sec:overview}.
We define the \emph{minimum trust problem} formally: given a set of candidate assumptions and a conclusion, it asks to find the set of assumptions with minimum cost that entails the conclusion.
We begin with an example, followed by two algorithms for the problem: first, a naive algorithm (\Cref{alg:mintrust}) and second, one based on Horn clause inference (\Cref{alg:horn-mintrust}) that forms the basis for our trust solver implementation.
To conclude the section, we show that the complexity of the minimum trust problem for Horn clauses is NP-complete, justifying such SMT-based approaches.

\subsection{Problem definition}

\paragraph*{Propositional logic basics.}
A \emph{propositional variable} represents a true or false statement that is indivisible into other true or false statements, such as ``the library is safe.'' We use lowercase letters to denote propositional variables.
A \emph{formula} in propositional logic is either a propositional variable, constant ($\top$ for true, $\bot$ for false), or built from propositional variables and logical connectives (``and'' $\wedge$, ``or'' $\vee$, ``not'' $\neg$, ``implies'' $\implies$). We use uppercase letters to denote formulas and uppercase Greek letters to denote sets of formulas.
A \emph{Horn clause} is a formula of the form $(p_1 \wedge \dots \wedge p_n) \implies q$, $\top \implies q$, or $(p_1 \wedge \dots \wedge p_n) \implies \bot$. The antecedent of the conditional is called the \emph{body} and the consequent is called the \emph{head}. A Horn clause is a \emph{definite clause} if it is in the form $(p_1 \wedge \dots \wedge p_n) \implies q$.
A set of definite Horn clauses is \emph{acyclic} if there exists a function $f$ from propositional variables to natural numbers such that for every definite Horn clause $(p_1 \wedge \dots \wedge p_n) \implies q$, $f(q) > f(p_i)$ for all $i$.
A \emph{truth assignment} is an assignment of true or false to propositional variables.
A set of formulas $\Gamma$ \emph{entails} another formula $A$ (written as $\Gamma \models A$) if and only if there is no truth assignment that makes all formulas in $\Gamma$ true and $A$ false. Equivalently, $\Gamma \models A$ if and only if $\Gamma \cup \{\neg A\} \models \bot$.
A set of formulas $\Gamma$ is \emph{satisfiable} if there exists some truth assignment that makes all formulas in $\Gamma$ true.

\begin{definition}[Minimum trust problem]
  Given as input
  (1) a set of propositional formulas $\Gamma=\{A_1, \ldots, A_n\}$ representing assumptions,
  (2) a propositional variable $c$ representing a conclusion, and
  (3) a function $C \colon \Gamma \to \mathbb{N}$ that assigns a cost to making each assumption,
  output a set of assumptions $\Delta \subseteq \Gamma$ such that the sum of the costs of assumptions in $\Delta$ is minimized, subject to the constraint $\Delta \models c$.
  \label{def:minimum-trust}
\end{definition}

We provide a simple example of the minimum trust problem, continuing the example from \Cref{sec:overview}. Suppose that \code{aho\_corasick} depends on \code{memchr}, where \code{aho\_corasick} was developed by Alice, and \code{memchr} was developed by Bob and has no dependencies.
\Cref{tab:example-assumptions} gives a set of candidate assumptions $A_1, \ldots, A_8$ a user can make about the safety of the \code{aho\_corasick} library and their associated costs in this scenario.
We are asked to find a subset of $A_i$ with minimum cost which entails $c$.
In this example, the minimum cost is $45$, incurred from assumptions $a$ (cost 20), $b$ (cost $5$), $(a \land m) \implies c$ (cost 10), and $b \implies m$ (cost $10$).
Note that the set of assumptions is not necessarily unique in general (as there may be a tie for the minimum cost), though it is unique in this example.

\begin{table}[t]
  \centering
  \small
  \setlength{\tabcolsep}{3pt}
  \renewcommand{\arraystretch}{1.15}
  \begin{tabular}{@{}c p{0.55\columnwidth} c c@{}}
    \toprule
    \textbf{No.} & \textbf{Assumption} & \textbf{Encoding} ($A_i$) & \textbf{Cost} \\
    \midrule
    1 & \code{aho\_corasick} is safe. & $c$ & 100 \\
    2 & \code{memchr} is safe. & $m$ & 100 \\
    3 & \code{aho\_corasick} has many downloads. & $d$ & 30 \\
    4 & If \code{aho\_corasick} has many downloads, then it is safe. & $d \implies c$ & 20 \\
    5 & Alice is trustworthy. & $a$ & 20 \\
    6 & Bob is trustworthy. & $b$ & 5 \\
    7 & If Alice is trustworthy and \code{memchr} is safe, then \code{aho\_corasick} is safe. & $(a \land m) \implies c$ & 10 \\
    8 & If Bob is trustworthy, then \code{memchr} is safe. & $b \implies m$ & 10 \\
    \bottomrule
  \end{tabular}
  \caption{Example instance of the minimum trust problem.}
  \Description{Candidate assumptions, their encodings, and costs for the running example.}
  \label{tab:example-assumptions}
  \vspace{-5pt}
\end{table}

\subsection{Naive algorithm}
\Cref{alg:mintrust} shows a naive solution for the minimum trust problem using an SMT solver by encoding the constraints as sentences over linear integer arithmetic and using quantifiers over propositional variables.
To associate each assumption being made to a cost, we use auxiliary propositional variables $a_1, \ldots, a_n$ to track each assumption $A_1, \ldots, A_n$.
Each $a_i$ is assigned to true if and only if $A_i$ is included in the solution set of assumptions $\Delta$.
In the running example, we would have $a_1 \implies c$, $a_4 \implies (d \implies c)$, etc.
We then encode the cost of the assumptions in $\Delta$
as a sum of if-then-else ($ite$) terms;
in the running example, the cost of any particular solution is given by $\code{cost}(a_1, \ldots, a_8) = ite(a_1, 100, 0) + \dots + ite(a_8, 10, 0)$.
The constraint that the conclusion $c$ follows from the assumptions being made, $\Delta \models c$, is encoded as a formula $\code{valid}(a_1, \ldots, a_n)$.
In the running example, $\code{valid}(a_1, \ldots, a_8) = \neg \exists c, m, d, a, b \: ((a_1 \implies c) \wedge \dots \wedge (a_8 \implies (b \implies m)) \wedge \neg c)$.
Finally, we encode using quantifiers the fact that $\code{cost}(a_1, \ldots, a_n)$ is minimized subject to the constraint $\code{valid}(a_1, \ldots, a_n)$.
In the running example, the SMT solver would produce a model that assigns $a_5, a_6, a_7, a_8$ to true and $a_1, a_2, a_3, a_4$ to false, incurring a cost of $45$.

\begin{algorithm}[t]
  \small
  \caption{Naive minimum trust algorithm.}
  \label{alg:mintrust}
  \begin{algorithmic}[1]
    \State \code{library}: a software dependency in a package repository (e.g., a Rust crate on \texttt{crates.io})
    \State \code{get\_all\_assumptions(library)}: a function that returns a set of all possible assumptions for \code{library}
    \State $C$: a function that assigns a cost to an assumption
    \State \code{S}: an SMT solver over the theory of linear integer arithmetic
    \Procedure{JoinAssumptions}{\code{lib}}
      \State $F \gets \top$
      \State $\Gamma \gets \code{get\_all\_assumptions(lib)}$
      \ForAll{$A_i$ in $\Gamma$}
        \State $F \gets F \wedge (a_{i_{\code{lib}}} \implies A_i)$
      \EndFor
      \ForAll{\code{dep} in \code{lib.dependencies}}
        \State $F \gets F \wedge \code{JoinAssumptions}(\code{dep})$ \Comment{\code{JoinAssumptions} is called recursively on each dependency}
      \EndFor
      \State \Output $F$
    \EndProcedure

    \Procedure{SolveMinimumTrust}{}
      \State $F(a_1, \ldots, a_n) \gets \code{JoinAssumptions(library)}$
      \State $\{v_1, \ldots, v_m\} \gets \code{atoms}(F) \setminus \{a_1, \ldots, a_n\}$ \Comment{\code{atoms} returns the set of propositional variables in a formula}
      \State $\code{valid}(a_1, \ldots, a_n) \gets \neg \exists c \exists v_1 \ldots \exists v_m  \: (F(a_1, \ldots, a_n) \wedge \neg c)$ \Comment{$c$ is the conclusion `the library is safe'}
      \State $\code{cost}(a_1, \ldots, a_n) \gets \sum_{i=1}^{n} ite(a_i, C(a_i), 0)$
      \State $\code{minimal} \gets \forall a_1 \ldots \forall a_n \: (\code{valid}(a_1, \ldots, a_n) \implies \code{cost}(a^*_1, \ldots, a^*_n) \leq \code{cost}(a_1, \ldots, a_n))$
      \State $\code{S.solve}(\code{valid}(a^*_1, \ldots, a^*_n) \land \code{minimal})$ \Comment{$(a^*_1, \ldots, a^*_n)$ is the solution to the minimum trust problem}
      \State \Output \code{S.model[$a^*_1, \ldots, a^*_n$], S.model[cost$(a^*_1, \ldots, a^*_n)$]}
    \EndProcedure
  \end{algorithmic}
\end{algorithm}

\begin{theorem}
  \Cref{alg:mintrust} is correct.
  \label{thm:mintrust-correct}
\end{theorem}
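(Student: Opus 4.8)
The plan is to show that the procedure $\code{SolveMinimumTrust}$ outputs a valid answer to the minimum trust problem of \Cref{def:minimum-trust}: writing $\sigma^\star$ for the model the solver returns and $\Delta^\star := \{A_i : \sigma^\star(a^*_i) = \text{true}\}$, we must have $\Delta^\star \subseteq \Gamma$, $\Delta^\star \models c$, and $\sum_{A_i \in \Delta^\star} C(A_i)$ minimal among all subsets of $\Gamma$ entailing $c$, with the returned number equal to this sum; as a boundary case, the solver should report \code{UNSAT} exactly when no such subset exists. First I would pin down what $\code{JoinAssumptions}$ computes. Assuming the transitive dependency relation of \code{library} is finite and acyclic (as it is for Cargo dependency graphs), the recursion terminates, and letting $\Gamma = \{A_1, \ldots, A_n\}$ be the union of the assumption sets of \code{library} and all its transitive dependencies, we obtain $F(a_1, \ldots, a_n) = \bigwedge_{i=1}^{n} (a_i \implies A_i)$, where the auxiliary variables $a_i$ are pairwise distinct and disjoint from the remaining propositional variables $v_1, \ldots, v_m$ of $F$.

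The core step is a correspondence lemma between truth assignments $\sigma$ to $a_1, \ldots, a_n$ and subsets $\Delta_\sigma := \{A_i : \sigma(a_i) = \text{true}\} \subseteq \Gamma$; the map $\sigma \mapsto \Delta_\sigma$ is a bijection onto the powerset of $\Gamma$. Under $\sigma$, each clause $a_i \implies A_i$ with $\sigma(a_i) = \text{false}$ collapses to $\top$, so, over the propositional variables other than $a_1, \ldots, a_n$, the formula $F$ is equivalent to $\bigwedge_{A_i \in \Delta_\sigma} A_i$. Hence $\code{valid}(\sigma)$, which is $\neg \exists c\, \exists v_1 \ldots \exists v_m\, \bigl(F \wedge \neg c\bigr)$, holds iff $\bigwedge_{A_i \in \Delta_\sigma} A_i \wedge \neg c$ is unsatisfiable, i.e. iff $\Delta_\sigma \cup \{\neg c\} \models \bot$, which by the equivalence recorded in the preliminaries is precisely $\Delta_\sigma \models c$; the point to check here is that the existential quantifiers cover exactly the non-auxiliary atoms of $F$ together with $c$, so this is genuine semantic entailment and nothing is left free. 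Similarly $\code{cost}(\sigma) = \sum_{i=1}^{n} ite(\sigma(a_i), C(A_i), 0) = \sum_{A_i \in \Delta_\sigma} C(A_i)$, the total cost of $\Delta_\sigma$. These two identities transport the minimum trust problem verbatim onto assignments of the $a_i$.

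It then remains to read off the meaning of the query $\code{valid}(a^*_1, \ldots, a^*_n) \wedge \code{minimal}$. By the lemma, some assignment satisfies $\code{valid}$ iff some $\Delta \subseteq \Gamma$ has $\Delta \models c$; by monotonicity of $\models$ (the whole set $\Gamma$ being the largest candidate) this is equivalent to $\Gamma \models c$, i.e. to feasibility of the instance. In the feasible case the valid assignments form a non-empty set with costs in $\mathbb{N}$, so a cost-minimizing valid assignment exists and the conjunct $\code{minimal}$ is satisfiable; any model $\sigma^\star$ of the query is then, by the shape of the two conjuncts, a valid assignment with $\code{cost}(\sigma^\star) \le \code{cost}(\sigma)$ for every valid $\sigma$. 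Transporting through the lemma, $\Delta^\star = \Delta_{\sigma^\star}$ satisfies $\Delta^\star \models c$ and has minimum total cost among all subsets of $\Gamma$ entailing $c$, and the value of $\code{cost}(a^*_1, \ldots, a^*_n)$ in the model equals that cost. In the infeasible case $\code{valid}(a^*_1, \ldots, a^*_n)$ is unsatisfiable, so the solver returns \code{UNSAT}, which is the correct report.

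I expect the main obstacle to be the faithful reading of the quantified SMT encoding rather than any hard mathematics: chiefly, arguing that $\code{valid}$ -- a negated existential over the right collection of variables -- captures semantic entailment $\Delta_\sigma \models c$ exactly (correct quantifier polarity, correct set of bound atoms, including the conclusion $c$), and, secondarily, that $\code{JoinAssumptions}$ is well-defined and terminating, which I would handle by appealing to finiteness and acyclicity of the transitive dependency graph. The remainder is routine propositional bookkeeping through the $\sigma \leftrightarrow \Delta_\sigma$ correspondence together with finiteness of the search space.
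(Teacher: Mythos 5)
Your proof is correct and follows the same route as the paper's: unwinding the encoding so that assignments to the auxiliary variables $a_i$ correspond to subsets $\Delta \subseteq \Gamma$, with $\code{valid}$ capturing $\Delta \models c$ and $\code{cost}$ capturing the total cost, and then reading off minimality from the universally quantified constraint. The paper's own proof is only a two-sentence sketch of this argument, so your version is substantially more detailed (including the correspondence lemma, the treatment of the quantifier scope, and the infeasible case) but not a different approach.
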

\begin{proof}
  \Cref{alg:mintrust} recursively calls \code{JoinAssumptions} on each library in the dependency tree on line 11 to construct the formula $F(a_1, \ldots, a_n)$ that represents the assumptions being made, then uses the above encoding to create the constraint $\code{valid}(a_1, \ldots, a_n)$. The algorithm encodes that the cost of assumptions is minimized subject to the constraint $\code{valid}(a_1, \ldots, a_n)$ on line 18.
\end{proof}

Although it is correct, the use of quantifiers makes the first algorithm computationally expensive;
it generates a linear number of nested quantifiers in the size of the dependency tree,
which have to be expanded before applying a SAT solver.
While some of this complexity is unavoidable (see \Cref{sec:model-complexity}), to address this we next propose an algorithm that does not use quantifiers, and which we have found to be more effective in practice.

\subsection{Horn clause algorithm}
The key idea of \Cref{alg:horn-mintrust} is to restrict the set of assumptions to be acyclic definite Horn clauses;
in this case, the problem can be solved efficiently with a binary search reduction using quantifier-free queries to an SMT solver.
The set of assumptions that can be expressed as acyclic definite Horn clauses is sufficient for all of the assumptions we identify based on prior work (\Cref{tab:trust_model_assumptions}).

In the running example, observe that the set of conjuncts of $F$ that establish the safety of the \code{aho\_corasick} library ($c$) are $a_1 \implies c$, $a_4 \implies (d \implies c)$, and $a_7 \implies ((a \land m) \implies c)$.
Rewritten as Horn clauses with a body and a head, these conjuncts are $a_1 \implies c$, $(a_4 \land d) \implies c$, and $(a_7 \land a \land m) \implies c$.
Therefore, with the goal of showing that $c$ is true, the formula $a_1 \lor (a_4 \land d) \lor (a_7 \land a \land m)$ must evaluate to true.
The only conjunct of $F$ that establishes $d$ is $a_3 \implies d$. Therefore, we can substitute $d$ with $a_3$ in the formula, resulting in $a_1 \lor (a_4 \land a_3) \lor (a_7 \land a \land m)$.
By repeating this unfolding process, we can simplify the formula to $a_1 \lor (a_4 \land a_3) \lor (a_7 \land a_5 \land a_6 \land a_8)$. This formula will evaluate to true if and only if the assumptions made entail the conclusion $c$.
\Cref{alg:horn-mintrust} uses this observation to solve the minimum trust problem when the assumptions are acyclic definite Horn clauses. Due to binary search, \Cref{alg:horn-mintrust} produces a logarithmic number of calls to the SMT solver with respect to the maximum cost, and each call is quantifier-free.

\begin{algorithm}[t]
  \small
  \caption{Horn clause optimized minimum trust algorithm.}
  \label{alg:horn-mintrust}
  \begin{algorithmic}[1]
    \State \code{library}: a library
    \State $C$: a function that assigns a cost to an assumption
    \State \code{S}: an SMT solver over the theory of linear integer arithmetic
    \Procedure{Unfold}{$F$}
      \State $F \gets \code{simplify}(F)$ \Comment{simplify the formula into a conjunction of Horn clauses}
      \State $\Pi \gets \code{clauses}(F)$ \Comment{get the set of Horn clauses in $F$}
      \State $E \gets \bot$ \Comment{$E$ is to be constructed such that $E$ is true iff the assumptions entail the conclusion}
      \ForAll{$\pi_i$ in $\Pi$}
        \If{$\code{head}(\pi_i)$ is the conclusion $c$}
          \State $E \gets E \lor \code{body}(\pi_i)$ \Comment{collect all the bodies of the Horn clauses with head $c$}
        \EndIf
      \EndFor
      \State $V \gets \code{atoms}(F)\setminus \{a_1, \ldots, a_n\}$ \Comment{$V$ stores the set of propositional variables in $F$ other than the $a_i$s}
      \ForAll{$v_i$ in $V$}
        \State $T \gets \bot$ \Comment{the formula to be substituted in place of $v_i$}
        \ForAll{$\pi_i$ in $\Pi$}
          \If{$\code{head}(\pi_i)$ is $v_i$}
            \State $T \gets T \lor \code{body}(\pi_i)$ \Comment{collect all the bodies of the Horn clauses with head $v_i$}
          \EndIf
          \State $E \gets E[v_i / T]$ \Comment{substitute $T$ in place of $v_i$ in $E$}
        \EndFor
        \ForAll{$\pi_i$ in $\Pi$}
          \State $\pi_i \gets \pi_i[v_i / T]$ \Comment{substitute $T$ in place of $v_i$ in $\pi_i$}
        \EndFor
      \EndFor
      \State \Output $E$
    \EndProcedure

    \Procedure{SolveHornMinimumTrust}{}
      \State $F(a_1, \ldots, a_n) \gets \code{JoinAssumptions(library)}$ \Comment{same as in \Cref{alg:mintrust}}
      \State $\code{valid}(a_1, \ldots, a_n) \gets \code{Unfold}(F)$
      \State $\code{cost}(a_1, \ldots, a_n) \gets \sum_{i=1}^{n} ite(a_i, C(a_i), 0)$
      \State $\code{left} \gets 0$ \Comment {initialize the min cost as 0}
      \State $\code{right} \gets \sum_{i=1}^{n} C(a_i)$ \Comment{the max cost is the sum of the costs of all assumptions}
      \While{$\code{left} < \code{right}$}
        \State $k \gets \lfloor (\code{left} + \code{right} + 1) / 2 \rfloor$ \Comment{binary search for min cost}
        \State $\code{cost\_leq\_k} \gets \code{S.solve}(\code{valid}(a_1, \ldots, a_n) \wedge \code{cost}(a_1, \ldots, a_n) \leq k)$
        \If{$\code{cost\_leq\_k}$ is SAT}
          \State $\code{left} \gets k$
        \Else
          \State $\code{right} \gets k - 1$
        \EndIf
      \EndWhile
      \State $\code{min\_cost} \gets \code{left}$
      \State \Output \code{S.model[$a_1, \ldots, a_n$], min\_cost}
    \EndProcedure
  \end{algorithmic}
\end{algorithm}

\begin{theorem}
  \Cref{alg:horn-mintrust} is correct for assumptions consisting of acyclic definite Horn clauses.
  \label{thm:horn-mintrust-correct}
\end{theorem}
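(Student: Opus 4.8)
The plan is to split the argument along the two procedures of \Cref{alg:horn-mintrust}. I read the hypothesis as saying that the full assumption set $\Gamma$ collected over the dependency tree is an \emph{acyclic} set of \emph{definite} Horn clauses, witnessed by a rank function $f$. Piece~(i): \code{Unfold} must transform $F$ into a quantifier-free formula $\code{valid}(a_1,\dots,a_n)$ over the auxiliary variables alone such that, for every Boolean assignment $\alpha$ to the $a_i$, $\code{valid}(\alpha)$ holds iff the selected assumptions $\Delta_\alpha := \{A_i : \alpha(a_i)=\top\}$ entail $c$. Piece~(ii): the \code{while} loop of \code{SolveHornMinimumTrust} binary-searches for the least value of $\code{cost}$ over the assignments satisfying $\code{valid}$ and returns a witness. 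For the groundwork behind~(i) I would reuse the reasoning of \Cref{thm:mintrust-correct}: \code{JoinAssumptions} produces $F=\bigwedge_i (a_i \implies A_i)$, and since each $A_i$ is a definite Horn clause, $a_i \implies A_i$ is equivalent to the definite Horn clause obtained by adjoining $a_i$ to the body of $A_i$. Hence the simplification step yields a set $\Pi$ of definite Horn clauses, and since the $a_i$ never occur as heads, extending $f$ by $f(a_i):=0$ keeps $\Pi$ acyclic. Substituting $\alpha$ into $\Pi$ leaves exactly the clause forms of the assumptions in $\Delta_\alpha$, so by soundness and completeness of forward chaining for definite programs, $\Delta_\alpha \models c$ iff $c$ lies in the least model $M_{\Pi[\alpha]}$.

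The heart of the proof is to show that \code{Unfold} returns a formula equivalent to $\widehat c$, where for each domain variable $v$ I define a formula $\widehat v$ over the $a_i$ by well-founded recursion on $f$: $\widehat v := \bigvee_{\pi\in\Pi:\,\code{head}(\pi)=v}\;\bigwedge_{x\in\code{body}(\pi)}\widehat x$, with $\widehat{a_i}:=a_i$ and $\widehat v:=\bot$ when $v$ is never a head. Acyclicity is exactly what makes this recursion well founded, since every atom in $\code{body}(\pi)$ has strictly smaller rank than $\code{head}(\pi)$. An induction on rank, using the fixpoint characterization of the least model, then gives that $\widehat v$ evaluates to $M_{\Pi[\alpha]}(v)$ under $\alpha$, for every $\alpha$; specializing to $v=c$ shows that $\widehat c$ is true under $\alpha$ iff $\Delta_\alpha\models c$. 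To see that \code{Unfold} actually computes $\widehat c$, I would maintain a loop invariant for its outer loop: after the domain variables in a set $P$ have been processed, (a)~no variable of $P$ occurs in $E$ or in any clause of the current $\Pi$, and (b)~for every $\alpha$, evaluating $E$ with each remaining domain variable set to its value in $M_{\Pi[\alpha]}$ agrees with $\Delta_\alpha\models c$. Initially $P=\emptyset$, $E$ is the disjunction of the bodies of the $c$-headed clauses, and (b) holds because $c$ is in the least model iff one of those bodies is. Each iteration substitutes $v\mapsto T=\bigvee\{\code{body}(\pi):\code{head}(\pi)=v\}$; this preserves~(b) because $T$ is precisely the value forced on $v$ in the current least model, and preserves~(a) because $v\notin T$ by acyclicity, so replacing $v$ throughout $E$ and $\Pi$ eliminates it permanently while leaving the other variables' least-model values unchanged (an unfolding transformation). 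When the loop ends, $P$ is all of $V=\code{atoms}(F)\setminus\{a_1,\dots,a_n\}$, so $E$ has been reduced to $\widehat c=\code{valid}$; termination is immediate since the loop ranges over the finite set $V$.

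For piece~(ii), the $ite$ terms give $\code{cost}(\alpha)=\sum_{i:\alpha(a_i)=\top}C(a_i)$, so minimizing $\code{cost}$ over $\{\alpha:\code{valid}(\alpha)\}$ is exactly the minimum trust problem under the correspondence $\alpha\leftrightarrow\Delta_\alpha$; the feasible set is nonempty whenever $\Gamma\models c$, a mild assumption that holds for the instances we build (the assumption ``$c$'' is always available), witnessed by $\alpha\equiv\top$. I would then check the usual binary-search invariant that $[\code{left},\code{right}]$ always brackets the optimum $\mu$, using monotonicity: the predicate $Q(k):=\exists\alpha.\,\code{valid}(\alpha)\wedge\code{cost}(\alpha)\le k$ is nondecreasing in $k$ and $Q(\sum_i C(a_i))$ holds, so $\{k:Q(k)\}$ is upward closed with least element $\mu$, and the loop terminates with $\code{left}=\mu$ after $O(\log\sum_i C(a_i))$ quantifier-free SMT queries; a model of $\code{valid}(\alpha)\wedge\code{cost}(\alpha)\le\mu$ then yields an optimal $\Delta$.

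The step I expect to be the main obstacle is the correctness of \code{Unfold} --- making the loop invariant precise and isolating the three distinct roles acyclicity plays: well-foundedness of $\widehat v$, the fact that $v\notin T$ so each variable is eliminated exactly once, and invariance of the least model under the unfolding substitution. By comparison, the reduction of $a_i\implies A_i$ to Horn form, the forward-chaining characterization of entailment, and the binary search are routine.
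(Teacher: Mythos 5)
Your proposal follows essentially the same route as the paper's proof: \code{Unfold} eliminates each domain variable by substituting the disjunction of the bodies of its defining clauses, acyclicity guarantees each variable is eliminated permanently, the resulting quantifier-free formula over the $a_i$ is equivalent to $\Delta_\alpha \models c$, and a binary search then isolates the minimum cost. The paper's version is a short informal sketch, so your least-model characterization of definite-clause entailment, the well-founded recursion defining $\widehat v$, and the explicit loop invariant are strictly more rigorous elaborations of the same argument rather than a different one. The only place your claim does not go through for the pseudocode as literally written is the binary search: since $Q(k)$ is \emph{upward} closed, the update rule shown (set $\code{left}\gets k$ on SAT, $\code{right}\gets k-1$ on UNSAT) converges to the \emph{maximum} satisfiable threshold, so the bracketing invariant you propose to verify would in fact fail; this is a defect of the algorithm's presentation (the implementation replaces the search with Z3's optimizer, and the paper's own proof glosses over it identically), and your invariant-checking step is exactly what would surface it.
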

\begin{proof}
  \Cref{alg:horn-mintrust} first constructs $E$ as the disjunction of all the bodies of the Horn clauses whose heads are the conclusion $c$ (line 10). The conclusion is provable from the assumptions if and only if $E$ evaluates to true.
  \Cref{alg:horn-mintrust} then proceeds to iterate through $V$ with the goal of eliminating all the
  propositional variables in $V$ from $E$. For each propositional variable $v_i$ in $V$, the
  algorithm constructs a formula $T$ consisting of the disjunction of the bodies of the Horn clauses
  whose heads are $v_i$ and then eliminates $v_i$ by substituting $T$ in place of $v_i$ in $E$,
  similar to the construction of $E$ on line 10. After the substitution, $E$ will still be a
  necessary and sufficient condition for the conclusion to be true.
  The substitution is then applied to each remaining clause in line 19; since the Horn clauses are acyclic, eliminating $v_i$ is guaranteed to remove $v_i$ from all future substitutions. This process is repeated until all propositional variables in $V$ are eliminated from $E$.
  Each step in \code{Unfold} maintains that $E$ is necessary and sufficient for the conclusion to be true.
  The formula generated by \code{Unfold} is used in a binary search to repeatedly query \code{S} for valid minimum trust solutions with different costs until the minimum cost is found.
\end{proof}

It should be noted that the unfolding procedure in \Cref{alg:horn-mintrust} is similar to the well-known SLD resolution algorithm for logic programming~\cite{kowalski1974predicate}.
Unfortunately, the generated formula has an exponential number of clauses in the worst case (with respect to the number of crates in the dependency tree).
This formula, however, is in disjunctive normal form (DNF) after minor simplification, which means the SMT solver can solve each clause independently.
Note that it is also possible to have a reduction that produces a polynomial-size formula, though this is not the reduction we use in our implementation.

\subsection{Complexity}
\label{sec:model-complexity}

To establish the complexity of the minimum trust problem (\Cref{def:minimum-trust}),
we study the corresponding decision problem.
To that end, define MINTRUST to be the problem that asks, given a tuple $(\Gamma, c, C, k)$ --- consisting of a set of assumptions $\Gamma$, a conclusion $c$, a cost function $C$, and integer $k$ --- does there exists a set of assumptions $\Delta \subseteq \Gamma$ such that the sum of the costs of the assumptions in $\Delta$ is less than or equal to $k$ and $\Delta \models c$? Similarly, define HORN-MINTRUST to be the restriction of this problem where $\Gamma$ consists of Horn clauses.
The following theorems establish that the unrestricted minimum trust problem is computationally infeasible (\Cref{thm:mintrust-sigma2p-hard}) and that the Horn-restricted minimum trust problem is intractable but still within the reach of SMT solvers (\Cref{thm:horn-mintrust-np-complete}).

\begin{theorem}
  MINTRUST is $\Sigma_2^P$-hard.
\label{thm:mintrust-sigma2p-hard}
\end{theorem}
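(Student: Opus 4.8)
The plan is to reduce from the canonical $\Sigma_2^P$-complete problem $\exists\forall$-SAT (2-QBF): given $\exists \vec x\,\forall \vec y\,\varphi(\vec x,\vec y)$ with $\varphi$ in DNF, decide whether it is true. It is convenient to use the complemented form: writing $\neg\varphi = \bigwedge_{j} D_j$ as a CNF, the QBF is true iff there is an assignment $\alpha$ to $\vec x$ such that the CNF $\neg\varphi(\alpha,\vec y)$ is unsatisfiable over $\vec y$ — that is, $\alpha$ makes $\varphi$ a tautology in $\vec y$. (The DNF restriction on $\varphi$ is essential: it is exactly what makes the inner universal check coNP-hard and hence the whole problem $\Sigma_2^P$-complete.)

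Given such an instance, I would build a MINTRUST instance $(\Gamma,c,C,k)$ as follows. Let $c$ be a fresh propositional variable. For each variable $x_i$, put assumptions in $\Gamma$ that let us \emph{guess} $\alpha(x_i)$ — for concreteness, think of the literals $x_i$ and $\neg x_i$, each at some cost $w$. For each clause $D_j$ of $\neg\varphi$, put the assumption $D_j \vee c$ in $\Gamma$ at low cost. The role of the latter is that, since $\Delta \models c$ iff $\Delta \cup \{\neg c\}$ is unsatisfiable (as noted before \Cref{def:minimum-trust}), adding $\neg c$ collapses $D_j \vee c$ to the clause $D_j$; so if $\Delta$ consists of a full, consistent guess of $\alpha$ together with all the assumptions $D_j \vee c$, then $\Delta\cup\{\neg c\}$ is unsatisfiable precisely when the CNF $\bigwedge_j D_j(\alpha,\vec y)$ has no model, i.e. precisely when $\alpha$ witnesses the QBF. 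I would then pick $C$ and $k$ so that a ``legal'' selection — one guessed literal per $x_i$, plus all the clause-assumptions — has total cost exactly $k$, and no cheaper selection can entail $c$.

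Correctness then has two directions. If the QBF is true, the legal selection for a witnessing $\alpha$ is a $\Delta$ with cost $k$ and $\Delta \models c$, so MINTRUST answers yes. Conversely, if some $\Delta$ with cost $\le k$ has $\Delta \models c$, one argues that $\Delta$ must encode a full, consistent assignment $\alpha$ with $\neg\varphi(\alpha,\cdot)$ unsatisfiable, so the QBF is true. The reduction is clearly polynomial, and membership in $\Sigma_2^P$ is immediate from \Cref{def:minimum-trust} (existentially guess $\Delta$, then universally check that $\Delta\cup\{\neg c\}$ has no model), so this would establish $\Sigma_2^P$-hardness — indeed completeness.

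The step I expect to be the main obstacle is the converse direction, specifically ruling out ``cheating'' selections. Because \Cref{def:minimum-trust} imposes \emph{no consistency requirement} on $\Delta$, any internally inconsistent $\Delta$ satisfies $\Delta \models c$ vacuously; with a naive encoding one can form such a $\Delta$ cheaply (e.g. by selecting both $x_i$ and $\neg x_i$ for one $i$ while omitting the guess for some other variable, keeping the total cost within budget). The crux of the proof is therefore a gadget/weighting design forcing every low-cost entailing $\Delta$ to correspond to a genuine total consistent assignment — for instance routing each guess through fresh guard variables so that contradictory selections cannot be formed at all, or choosing weights so that a contradictory pair cannot coexist with the rest of a budget-$k$ selection. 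This is exactly the point at which standard abduction hardness proofs use their built-in consistency requirement (which $\Delta$ here lacks), so it must be re-engineered; once it is in place, the two-direction argument above is routine.
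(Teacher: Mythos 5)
There is a genuine gap, and it is exactly the one you flag yourself: the reduction as written is not correct without the consistency gadget, and that gadget is the entire content of the hardness argument, not a routine detail to be filled in later. Concretely, with literal assumptions at cost $w$ each and a budget $k$ large enough to afford one literal per existential variable, the set $\Delta = \{x_1, \neg x_1\}$ costs $2w \le k$ (for $n \ge 2$) and satisfies $\Delta \models c$ vacuously, so every instance of your reduction is a yes-instance and no information about the QBF is extracted. Moreover, the weighting-only repair you float provably cannot work: completeness requires every full assignment to be affordable, i.e.\ $\sum_i \max(w_i^+, w_i^-) \le k$, while blocking complementary pairs requires $w_i^+ + w_i^- > k$ for each $i$; together these force $\min(w_i^+, w_i^-) > \max(w_{i'}^+, w_{i'}^-)$ for all $i \ne i'$, which is impossible for $n \ge 2$. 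The guard-variable alternative is also delicate --- e.g.\ renaming the literals of $x_i$ to fresh positive ``kill'' atoms $s_i^+, s_i^-$ removes inconsistency but lets a selection kill \emph{both} literals of $x_i$, which over-constrains the clause set and creates false positives (take $\varphi$ with $\neg\varphi = (x \vee y) \wedge (\neg x \vee \neg y)$: killing both $x$-literals yields an unsatisfiable CNF even though neither genuine value of $x$ does). So a working gadget must be exhibited, not gestured at.

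The paper avoids this problem entirely by choosing a different source problem: it reduces from Shortest Implicant Core (given a DNF $\varphi$, an implicant $I$ of $\varphi$, and $k$, is there an implicant $I' \subseteq I$ with $|I'| \le k$?), which is known to be $\Sigma_2^P$-hard. Because $I$ is an implicant, it is a \emph{consistent} set of literals, so every subset $\Delta$ of the candidate assumptions is consistent and $\Delta \models c$ can never hold vacuously; the reduction (each literal of $I$ at cost $1$, plus $\varphi \implies c$ at cost $0$) then goes through with no gadgetry. If you want to keep your 2-QBF route, you essentially have to re-prove the hardness of a consistent-selection problem from scratch, which is what the SIC reduction packages for free; I would recommend switching source problems rather than engineering the gadget.
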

\begin{proof}
  We reduce from Shortest Implicant Core (SIC), which is known to be $\Sigma_2^P$-hard~\cite{schaefer2001phcompleteness}~\cite{umans2001implicants}. The SIC problem asks: given a DNF formula $\varphi$, an implicant $I$ of $\varphi$ (a set of literals whose conjunction implies $\varphi$), and an integer $k$, does an implicant $I' \subseteq I$ of size at most $k$ exist?
  Given an instance of SIC with a DNF formula $\varphi$, an implicant $I$ of $\varphi$, and an integer $k$, we construct a MINTRUST instance $(\Gamma, c, C, k)$.
  Let $\Gamma$ contain an assumption for each literal in $I$ with a cost of $1$ and the assumption $\varphi \implies c$ with a cost of $0$.

  By construction, $\Delta \models c$ if and only if $\Delta \models \varphi$ and $(\varphi \implies c) \in \Delta$. Furthermore, $\Delta \models \varphi$ if and only if $\Delta$ contains an implicant $I' \subseteq I$ of $\varphi$. Then, the cost associated with $\Delta$ is equal to the size of $I'$, so $\varphi$ has an implicant of size at most $k$ if and only if the MINTRUST instance has a solution of cost at most $k$. Therefore, an implicant $I' \subseteq I$ of $\varphi$ of size at most $k$ exists if and only if this MINTRUST instance has a solution of cost at most $k$.
\end{proof}
\begin{theorem}
  HORN-MINTRUST is NP-complete.
\label{thm:horn-mintrust-np-complete}
\end{theorem}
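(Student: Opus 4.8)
The plan is to prove the two halves separately: membership in NP, and NP-hardness. For membership, I would take the witness for a yes-instance $(\Gamma, c, C, k)$ to be the subset $\Delta \subseteq \Gamma$ itself, whose size is at most $|\Gamma|$ and hence polynomial. Given $\Delta$, the verifier checks (i) that $\sum_{A \in \Delta} C(A) \le k$, a single integer comparison after summing, and (ii) that $\Delta \models c$. For (ii) I would use the equivalence recorded in the preliminaries: $\Delta \models c$ iff $\Delta \cup \{\neg c\}$ is unsatisfiable. Since $c$ is a propositional variable, $\neg c$ is itself the Horn clause $c \implies \bot$, so $\Delta \cup \{\neg c\}$ is again a finite set of Horn clauses, and Horn satisfiability is decidable in linear time by unit propagation / forward chaining. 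Thus the verifier runs in polynomial time and HORN-MINTRUST $\in$ NP.

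For NP-hardness I would reduce from \textsc{Vertex Cover} (any minimum covering problem, e.g.\ \textsc{Set Cover}, works equally well). Given a graph $G = (V, E)$ and a bound $k$, construct the instance $(\Gamma, c, C, k)$ using a fresh propositional variable $p_v$ for each $v \in V$, a fresh $q_e$ for each $e \in E$, and the conclusion variable $c$, and let $\Gamma$ consist of: (a) the fact $p_v$ (that is, $\top \implies p_v$) for each $v \in V$, each with cost $1$; (b) for each edge $e = \{u, w\}$, the two definite clauses $p_u \implies q_e$ and $p_w \implies q_e$, each with cost $0$; and (c) the single clause $(\bigwedge_{e \in E} q_e) \implies c$, with cost $0$. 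Every clause is a definite Horn clause, the set is acyclic ($p_v$ below $q_e$ below $c$), and the construction is plainly polynomial-time, so this is a legitimate HORN-MINTRUST instance (in fact it shows even the acyclic-definite restriction is hard).

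For correctness of the reduction: since the clauses in groups (b) and (c) have zero cost, some minimum-cost $\Delta$ contains all of them without loss of generality, so the only remaining freedom is which facts $p_v$ to take, and the cost of $\Delta$ equals the number of chosen vertices. Using the least-model characterization of Horn theories, $\Delta \models c$ iff every $q_e$ is derivable iff for every edge $e = \{u,w\}$ at least one of $p_u, p_w$ is a chosen fact, i.e.\ iff the chosen vertex set is a vertex cover of $G$. Hence the minimum cost over valid $\Delta$ equals the minimum vertex cover number of $G$, and $(\Gamma, c, C, k)$ is a yes-instance of HORN-MINTRUST iff $G$ has a vertex cover of size at most $k$. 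Combined with the NP membership above, this gives NP-completeness.

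The membership argument is routine once one invokes linear-time Horn satisfiability, so I expect the delicate point to be the correctness of the hardness reduction — specifically two small facts that must be spelled out carefully: first, that restricting attention to candidate solutions $\Delta$ containing every zero-cost clause is genuinely without loss of generality (adding such a clause can only help derive $c$ and never raises the cost); and second, that semantic entailment $\Delta \models c$ coincides with forward-chaining derivability of $c$ from $\Delta$'s Horn clauses, so that ``$c$ is entailed'' translates exactly into ``every edge is covered.'' Both are standard, but they are where the two directions of the equivalence are actually matched up, and the rest of the argument follows immediately.
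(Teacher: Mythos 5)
Your proposal is correct and takes essentially the same route as the paper: membership in NP via the observation that $\Delta \models c$ reduces to Horn-satisfiability of $\Delta \cup \{c \implies \bot\}$, and NP-hardness via the identical Vertex Cover reduction (unit-cost vertex facts, zero-cost clauses deriving each edge variable from an incident vertex, and a zero-cost clause deriving $c$ from all edge variables). The only difference is that you spell out two details the paper leaves implicit --- that including all zero-cost clauses is without loss of generality, and that entailment coincides with forward-chaining derivability --- which is a welcome bit of extra care but not a different argument.
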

\begin{proof}
  \textit{HORN-MINTRUST is NP-hard.}
  We reduce from Vertex Cover~\cite{karp1972reducibility}.
  Given an instance of vertex cover with a graph $G = (V, E)$ and an integer $k$, we construct a HORN-MINTRUST instance $(\Gamma, c, C, k)$.
  Let $\Gamma$ contain an assumption $v_i$ for each vertex $v_i \in V$ with cost $1$, an assumption $v_i \implies e_j$ for each vertex $v_i \in V$ that is incident to edge $e_j \in E$ with cost $0$, and an assumption $(e_1 \land \dots \land e_{|E|}) \implies c$ with cost $0$.
  By construction, $\Delta \models c$ if and only if $\Delta \models e_j$ for all $e_j \in E$ and $((e_1 \land \dots \land e_{|E|}) \implies c) \in \Delta$. Furthermore, $\Delta \models e_j$ for each $e_j \in E$ if and only if $v_i \in \Delta$ and $(v_i \implies e_j) \in \Delta$ for some incident $v_i$. It follows that $\Delta \models e_j$ for all $e_j \in E$ if and only if $\Delta$ is a subset of the $v_i$s for a vertex cover of $G$, together with the assumptions for incident edges. The cost associated with $\Delta$ is equal to the size of the vertex cover, which is at most $k$.

  \textit{HORN-MINTRUST is in NP.}
  We construct a verifier for HORN-MINTRUST. On input a candidate solution $\Delta$, the verifier
  checks that the sum of the costs of the assumptions in $\Delta$ is at most $k$ in one pass.
  Since $\Delta \models c$ holds if and only if the conjunction of the Horn clauses in $\Delta$ with $(c \implies \bot)$ is unsatisfiable, and Horn-satisfiability is in P, checking that $\Delta \models c$ is possible in polynomial time.
\end{proof}

\section{Design and Implementation}
\label{sec:impl}
\tool{} is a Python-based tool for quantifying trust in the Rust supply chain ecosystem, given as input a Rust crate.\footnote{\url{https://github.com/davispl/cargo-sherlock}} It provides
a logs mode, which collects and displays information about the crate, and a trust mode (see output in \Cref{fig:trust_anyhow}), which is used to compute the trust cost and severity label of the crate,
with components as detailed in the following subsections.

\begin{figure*}[t]
    \centering
    \includegraphics[width=0.72\textwidth]{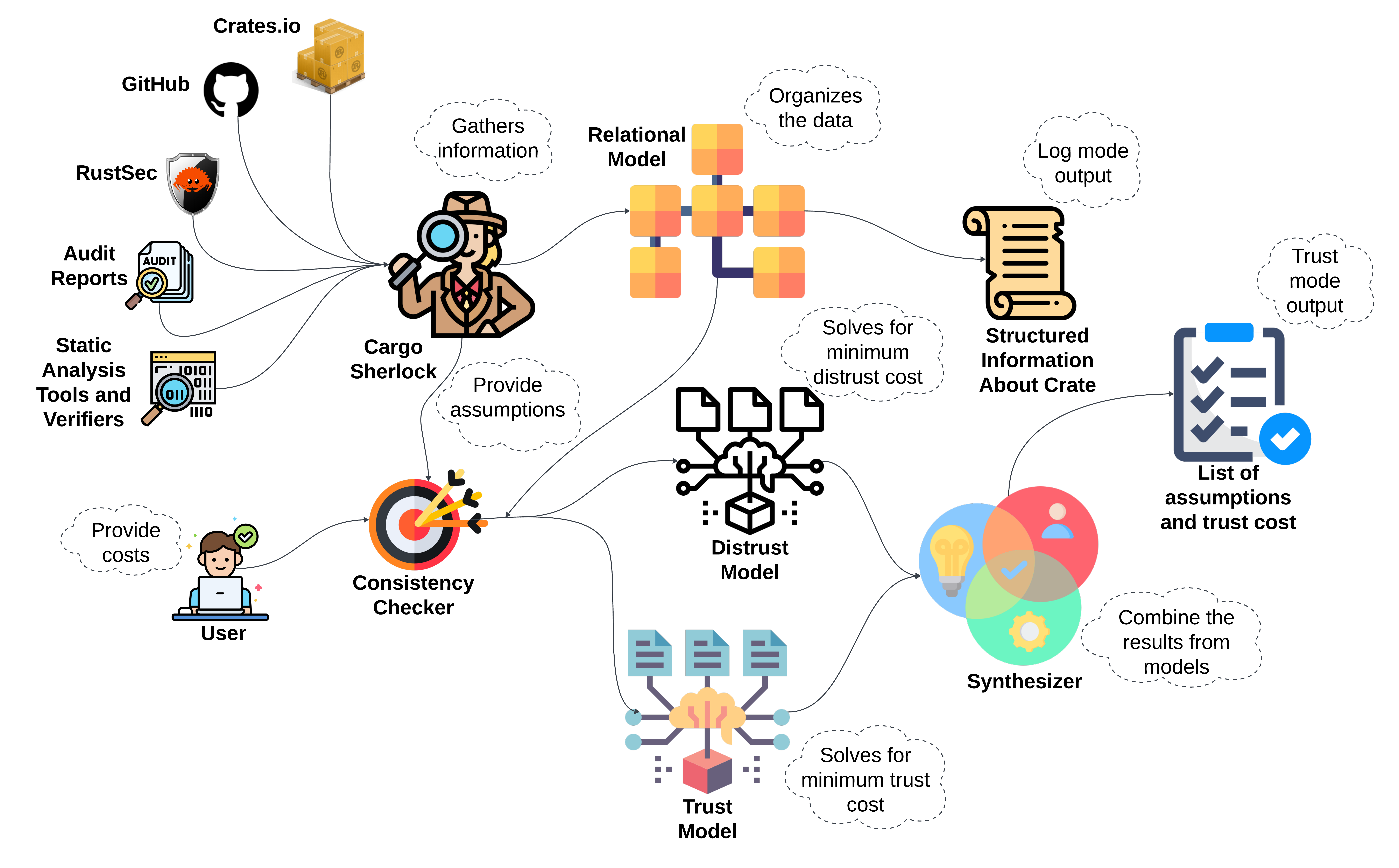}
    \caption{High level overview of \tool{}.}
    \Description{High level overview of the Cargo Sherlock tool. The tool gathers information from multiple sources, builds a relational model, and uses Z3 solver to compute the trust cost and severity label of a given Rust crate.}
    \label{fig:tool}
\end{figure*}

\subsection{Data collection}
\label{sec:data_collection}

\tool{} first gathers crate metadata from multiple sources. From \texttt{crates.io}, it collects information about the author(s), download counts, dependencies, and GitHub repository stars and forks. From Cargo Vet~\cite{mozilla2024cargovet}, it sources audit reports from organizations including Google, Mozilla, and Bytecode Alliance. \tool{} also gathers the outputs from static analysis tools (Miri~\cite{miri} and Cargo Scan~\cite{cargo-scan,cargo-scan-paper}) and vulnerability information from RustSec~\cite{rustsec}.

\subsection{Relational model}
\label{sec:relation_model}
The aggregated data is organized into a relational model as shown in \Cref{fig:er-diagram}. The primary \emph{entities} include: \texttt{Crates}, \texttt{Authors}, \texttt{Audits}, \texttt{Dependencies}, and \texttt{Tool Results}. Each \texttt{Crate} includes basic metadata such as its name, version, downloads, and related git data (e.g., stars and forks). \texttt{Crates} can have multiple associated tool results, audits, authors, and dependencies, following a one-to-many pattern. \texttt{Dependencies} connect crates to other crates, while \texttt{Audits} link organizations and trust criteria to crates.

\begin{figure}[t]
    \centering
    \includegraphics[width=\columnwidth]{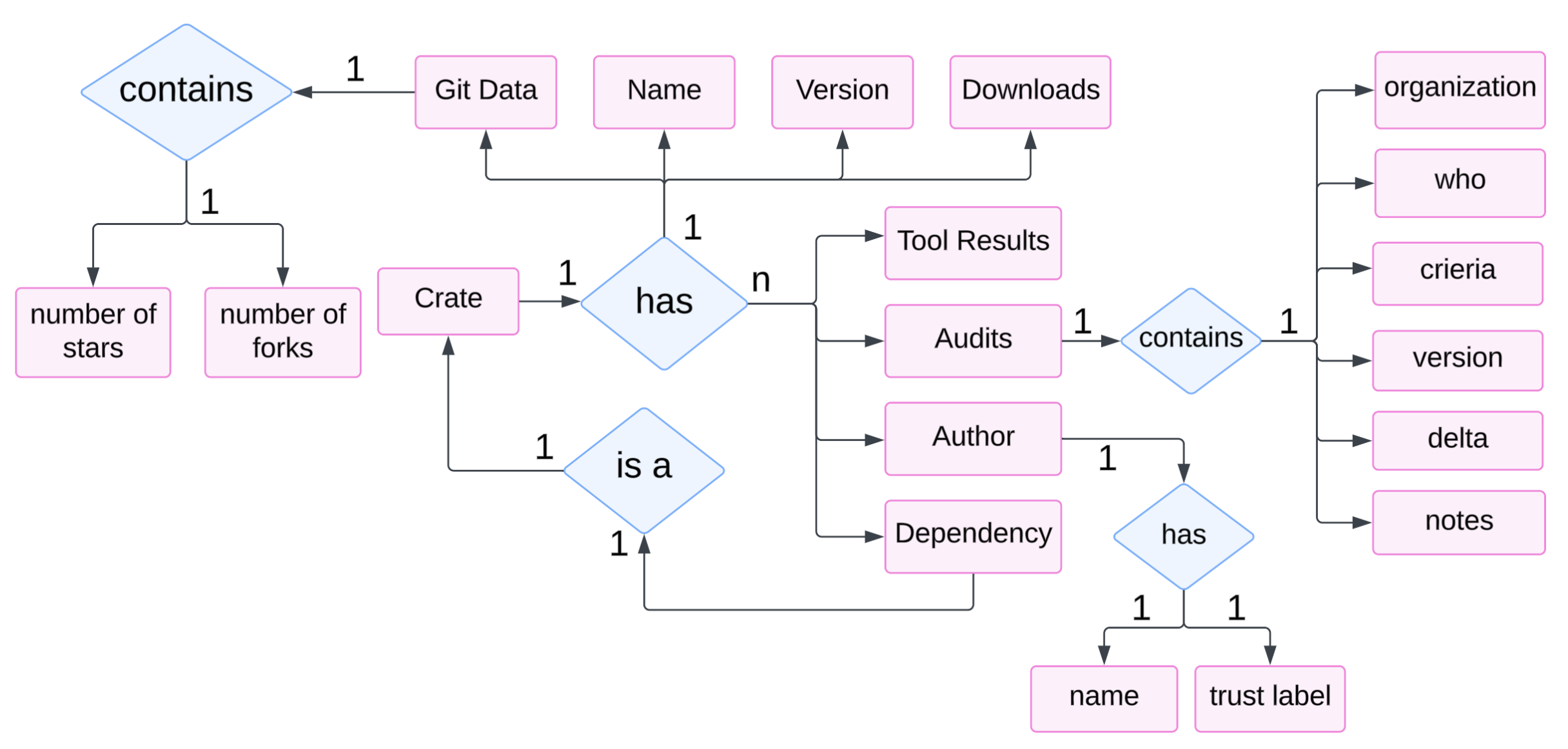}
    \Description{Entity-Relationship (ER) diagram of the Cargo Sherlock relational model, showing entities such as Crate, Author, Audit, Dependency, and Tool Results, along with their attributes and relationships.}
    \caption{Cargo Sherlock entity-relationship (ER) model.}
    \label{fig:er-diagram}
\end{figure}

\subsection{Selection of assumptions}
\label{sec:selection_of_assumptions}

Our candidate assumptions are motivated by prior work on software supply chain trust. Hamer et al.~\cite{hamer2025trusting} identified key trust factors among Rust developers, while Boughton et al.~\cite{boughton2024decomposing} decomposed these into distinct dimensions. \Cref{tab:trust_distrust_split} shows some of the assumptions we support and their justifications in prior research.

\tool{} supports three kinds of assumptions: positive, negative, and parameterized. The costs assigned to
each candidate assumption are configurable by the user. In our implementation, we provide default costs, which we arrived at through a process of trial and error.

\begin{table*}[!t]
  \centering
  \small
  \setlength{\tabcolsep}{3pt}
  \renewcommand{\arraystretch}{0.98}

  \begin{subtable}[t]{0.49\textwidth}
    \centering
    \caption{Trust model assumptions}
    \label{tab:trust_model_assumptions}
    \begin{tabular}{@{}c p{0.66\linewidth} p{0.2\linewidth}@{}}
      \toprule
      \textbf{Cost} & \textbf{Assumption} & \textbf{Justification} \\
      \midrule
      100     & The crate is safe. & Base \\
      25--100 & If the crate has many downloads, it is safe. & \cite{hamer2025trusting} (Tbl.~7) \\
      5       & If the crate version has a passed audit, it is safe. & \cite{hamer2025trusting} (Tbl.~7) \\
      20      & If a past version has a passed audit, it is safe. & \cite{hamer2025trusting} (Tbl.~7) \\
      20--100 & If the repository has many stars/forks, it is safe. & \cite{hamer2025trusting} (Tbl.~7) \\
      10      & If the crate has no side effects and all dependencies are safe, it is safe. & \cite{safe-haskell,cargo-scan-paper} \\
      5       & If the author of the crate is trusted, it is safe. & \cite{hamer2025trusting} (Tbl.~7) \\
      \bottomrule
    \end{tabular}
  \end{subtable}\hfill
  \begin{subtable}[t]{0.49\textwidth}
    \centering
    \caption{Distrust model assumptions}
    \label{tab:distrust_model_assumptions}
    \begin{tabular}{@{}c p{0.66\linewidth} p{0.2\linewidth}@{}}
      \toprule
      \textbf{Cost} & \textbf{Assumption} & \textbf{Justification} \\
      \midrule
      100     & The crate is not safe. & Base \\
      5       & If the crate appears on RustSec with a critical severity label, it is not safe. & \cite{boughton2024decomposing} \\
      90      & If the crate appears on RustSec and the provided crate version is patched, it is not safe. & \cite{boughton2024decomposing} \\
      30      & If the crate is flagged by Miri, it is not safe. & \cite{boughton2024decomposing} \\
      60--100 & If the crate has many side effects, it is not safe. & \cite{safe-haskell,cargo-scan-paper} \\
      10      & If the crate has an unsafe dependency, it is not safe. & \cite{boughton2024decomposing} \\
      \bottomrule
    \end{tabular}
  \end{subtable}

 \caption{Selected assumptions used by \tool{} (including justifications in prior work).}
 \Description{Side-by-side tables summarizing trust and distrust assumptions with costs and justifications.}
  \label{tab:trust_distrust_split}
  \vspace{-5pt}
\end{table*}

\paragraph*{Positive assumptions:}
These are candidate assumptions a user might use to reason that a crate is safe. By default, \tool{} includes the following:

\begin{itemize}
    \item \textbf{Trusted Author}: A crate is safe if it is authored by a well-established developer in the Rust community~\cite{hamer2025trusting}. \tool{} includes a default list of trusted developers, which users can customize.

    \item \textbf{Audit Passed}: A crate is safe if it has passed an audit~\cite{hamer2025trusting}. This assumption has a variant for crates with a previous version that have passed an audit.

    \item \textbf{Side Effects and Safe Dependencies}: A crate is safe if it has no \emph{side effects}, i.e., no observable change to program state or the external environment beyond returning a value, and all of its dependencies are safe~\cite{safe-haskell,cargo-scan-paper}. Side effects are identified using Cargo Scan~\cite{cargo-scan}.

    \item \textbf{Crate is Safe}: By default, a crate is safe (this assumption has the highest possible cost). \tool{} includes this assumption with no justification as a base case.

\end{itemize}

\paragraph*{Negative assumptions:}
These are candidate assumptions a user might use to reason that a crate is unsafe. By default, \tool{} includes the following:

\begin{itemize}
    \item \textbf{RustSec}: A crate is unsafe if it appears in the RustSec advisory database~\cite{rustsec}. This assumption has variants for crates with a previous version that appear in RustSec and for different RustSec labels.

    \item \textbf{Tool Results}: A crate is unsafe if it is flagged by a static analysis tool~\cite{boughton2024decomposing}. Currently, \tool{} supports the assumption that crates flagged by Miri~\cite{miri} are unsafe.

    \item \textbf{Unsafe Dependency}: A crate is unsafe if it has an unsafe dependency.~\cite{boughton2024decomposing}

    \item \textbf{Crate is Unsafe}: By default, a crate is unsafe (this assumption has the highest possible cost). \tool{} includes this assumption with no justification as a base case.
\end{itemize}

\paragraph*{Parameterized assumptions:}
These are candidate assumptions for which the trust cost is calculated as a function of certain metadata parameters; we use a smoothing function so that the trust cost changes gradually with changes to the parameter. We also allow parameterized negative assumptions, which work for distrust cost in the same way. By default, \tool{} includes the following:

\begin{itemize}
    \item
    \textbf{Number of Downloads}: A crate with many downloads is safe~\cite{hamer2025trusting}.

    \item
    \textbf{GitHub Stars and Forks}: A crate with many stars and forks on its GitHub is safe~\cite{hamer2025trusting}.

    \item
    \textbf{Side Effects}: A crate with many side effects is unsafe~\cite{safe-haskell,cargo-scan-paper}.
\end{itemize}

\tool{} applies a \emph{consistency checker} to ensure checks on all assumption
costs; currently, it checks that no assumption has a cost higher than the base cost of 100.

\subsection{Computing the trust cost}
\label{sec:z3_solver}

We use the Z3 SMT solver~\cite{z3} to compute minimum trust and distrust costs using either \Cref{alg:mintrust} or \Cref{alg:horn-mintrust} (default). In our implementation, we replace the explicit binary search over costs with Z3's optimizing SAT interface to obtain the minimum cost directly. The trust cost ranges from 0 to 100,
where a lower trust cost represents a crate that is easier to trust. Similarly, the distrust cost ranges from 0 to 100, where a lower distrust cost represents a crate that is easier to distrust.
The final label is derived by combining both costs, using the function illustrated in \Cref{fig:trust_issues}.

\begin{figure}[t]
  \centering
  \scriptsize
  \includegraphics[width=0.8\linewidth,height=0.6\linewidth,keepaspectratio]{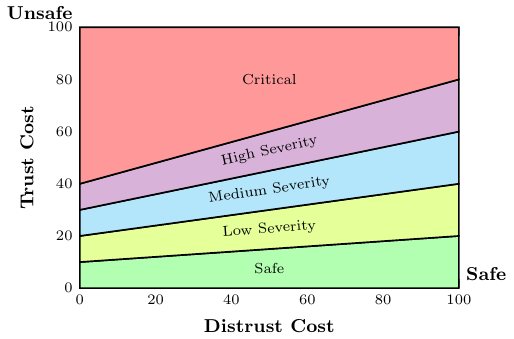}
  \caption{\tool{} combines trust and distrust costs to produce a severity label.}
  \Description{A scatter plot showing the relationship between distrust cost and trust cost, divided into regions labeled Safe, Low Severity, Medium Severity, High Severity, and Critical. The plot illustrates how \tool{} combines these costs to produce a severity label for Rust crates.}
  \label{fig:trust_issues}
\end{figure}

\section{Evaluation}
\label{sec:eval}

In this section, we evaluate \tool{} on the following research questions:

\begin{itemize}
    \item \textbf{RQ\textsubscript{\textcolor{red}{1}}: Does \tool{} detect synthetically generated typosquatting attacks?}
    \item \textbf{RQ\textsubscript{\textcolor{red}{2}}: Does \tool{} detect real-world supply chain risks?}
    \item \textbf{RQ\textsubscript{\textcolor{red}{3}}: How does \tool{} output fare on known vulnerable crates?}
    \item \textbf{RQ\textsubscript{\textcolor{red}{4}}: How does the performance of \Cref{alg:mintrust,alg:horn-mintrust} scale with the number of dependencies?}
\end{itemize}

\paragraph*{Experimental setup.}
All evaluations were conducted on an M2-Pro MacBook Pro
with 16 GB of RAM.
Our experiments are run on various sets of crates as described below: the 100 most downloaded crates from \texttt{crates.io}, a random sample of 1000 crates, a sample of crates flagged in the RustSec database, and two case studies.
All experiments were run between March and November 2025.

\paragraph*{User model.}
For all experiments, we adopt an idealized ``rational developer'' model in the sense of Dimoulas et al.~\cite{dimoulas2025rational}. The rational developer interacts with our tool as follows: when they intend to add a dependency to their existing Rust project, they execute \tool{} on the selected crate. This generates a report accompanied by a severity label;
we assume that the developer looks at the report and decides whether to go forward with adding the dependency based on the severity label. For each experimental section, we interpret our findings through this user model and discuss the resulting implications for prospective users of \tool{}.

\subsection{RQ\texorpdfstring{$_{\textcolor{red}{1}}$}{1}: Does \tool{} detect synthetically generated typosquatting attacks?}
\label{sec:rq1}

\emph{Typosquatting} is an attack technique that exploits typographical errors to deceive users into downloading a malicious package, such as downloading \texttt{serde\_yml} in place of \texttt{serde\_yaml}. We consider typosquatting because it is a common category of supply chain attack, and aligns closely with \tool{}'s intended use case.
To address RQ1, we constructed synthetic typosquatted versions of the top 100 most frequently downloaded Rust crates from \texttt{crates.io}. We selected popular crates as a test sample because their popularity makes them a lucrative target for an attacker.
We generated typosquatted variants by modifying the name and author of each crate, while keeping the functionality unchanged. To determine how many downloads would be obtained by a lesser-known crate by default, we measured that placeholder crates on \texttt{crates.io} received an average of 171 downloads within 3 months. We use this number for the download count in our synthetic experiment.
After creating typosquatted variants, we evaluated them using \tool{} and compared their
severity labels with corresponding original crates. \Cref{fig:typo_heat} shows the distribution of
obtained severity labels. Most of the original crates are classified as \texttt{safe} (73\%) and no
crate is classified higher than \texttt{low\_severity}. On the other hand, among typosquatted crates,
most are classified as \texttt{critical}. However, 25 typosquatted crates also receive the \texttt{safe} label, likely because they have no
side effects and they all have safe dependencies. The severity label increases for the typosquatted crate when compared to the original crate in all other cases.

\findingsbox{RQ$_{\textcolor{red}{1}}$}{Most typosquatted versions of popular Rust crates incur substantially higher trust costs and higher severity labels compared to their original counterparts when evaluated by \tool{}.
\textbf{Implication:} A user interacting with the tool would identify the typosquatted crate variant when adding a dependency to their code.}

\begin{figure}[!t]
  \centering

  \begin{subfigure}[t]{\columnwidth}
    \centering
    \includegraphics[width=0.6\linewidth]{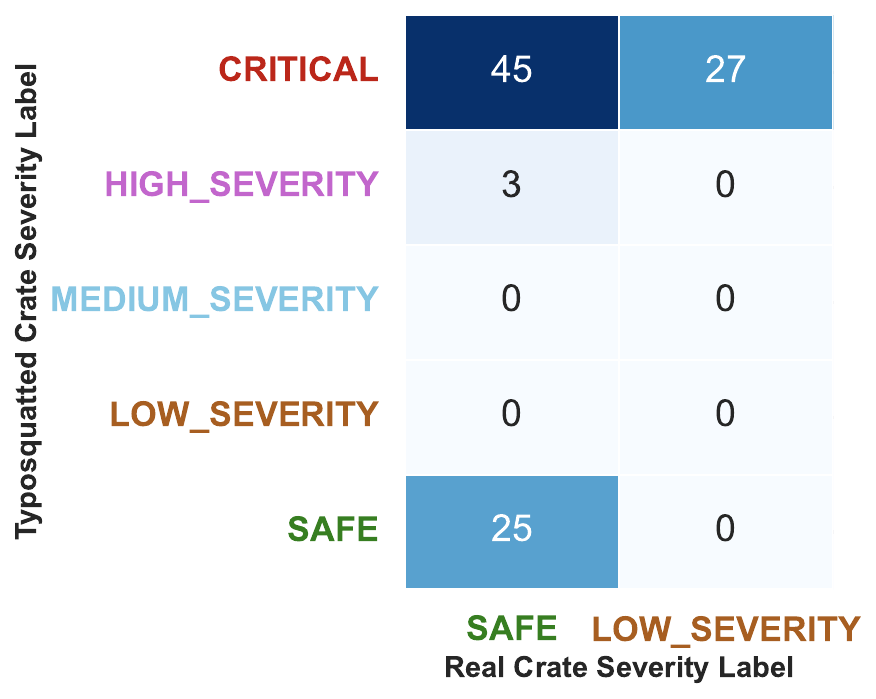}
    \caption{\tool{} results on synthetically generated typosquatting attack crates (compared to their real versions).}
    \Description{\tool{} severity labels on synthetically generated typosquat crates (compared to their real versions).}
    \label{fig:typo_heat}
  \end{subfigure}

  \medskip

  \begin{subfigure}[t]{\columnwidth}
    \centering
    \footnotesize
    \setlength{\tabcolsep}{3pt}
    \renewcommand{\arraystretch}{1.05}
    \begin{tabular}{@{}p{0.36\linewidth} c c p{0.27\linewidth}@{}}
      \toprule
      Crate & Trust Cost & Distrust Cost & Severity Label \\
      \midrule
      \multicolumn{4}{@{}l}{\textbf{Pair 1:} \texttt{fast\_log} (original) vs \texttt{faster\_log} (typosquatted)}\\
      \cmidrule(lr){1-4}
      \texttt{fast\_log-1.7.7} & 37 & 73 & \textcolor{cyan!50!white}{\textbf{MEDIUM\_SEVERITY}} \\
      \texttt{faster\_log-1.7.8}  & 75 & 80 & \textcolor{red!80!black}{\textbf{CRITICAL}} \\
      \midrule
      \multicolumn{4}{@{}l}{\textbf{Pair 2:} \texttt{serde\_yaml} (original) vs \texttt{serde\_yml} (AI-maintained fork)}\\
      \cmidrule(lr){1-4}
      \texttt{serde\_yaml-0.9.33} & \textbf{6} & 70 & \textcolor{green!50!black}{\textbf{SAFE}} \\
      \texttt{serde\_yml-0.0.12}  & 32         & 70 & \textcolor{orange!70!black}{\textbf{LOW\_SEVERITY}} \\
      \bottomrule
    \end{tabular}
    \caption{\tool{} results on real-world incidents.
    }
    \Description{A table summarizing trust cost, distrust cost, and severity level for two pairs of crates: fast_log vs faster_log, and serde_yaml vs serde_yml.}
    \label{tab:trust_summary}
  \end{subfigure}
  \caption{\Cref{sec:rq1,sec:rq2} experimental results.}
  \label{fig:severity_overview}
  \Description{RQ1 and RQ2 severity results overview.}
\end{figure}

\subsection{RQ\texorpdfstring{$_{\textcolor{red}{2}}$}{2}: Does \tool{} detect real-world supply chain risks? }
\label{sec:rq2}

We address this question as a case study on Rust crates that have been involved in known incidents on \texttt{crates.io}. We obtained the source code for two such crates that exist to our knowledge: \texttt{faster\_log} and \texttt{serde\_yml}.\footnote{We are aware of known incidents for at least two other Rust crates:
  \texttt{rustdecimal}~\cite{rustsecRUSTSEC20220042Rustdecimal} and
  \texttt{asyncprintln}~\cite{fasterlog-asyncprintln}.
  Unfortunately, we were unable to obtain the source code for the typosquatted versions of these crates to include them in the analysis.
}

We first examine \texttt{faster\_log}, an identified malicious typosquat of \texttt{fast\_log}~\cite{fasterlog-asyncprintln}. Beyond reproducing the original crate's logging functionality, it injects code that scans log files for sensitive material (including cryptocurrency private keys) and exfiltrates any matches to a remote endpoint. \tool{} assigns a \texttt{MEDIUM\_SEVERITY} label (trust cost $37$, distrust cost $73$) to \texttt{fast\_log}, basing the trust on the author (they have a history of publishing crates) and the distrust on the presence of side effects (it performs file I/O). In contrast, \tool{} assigns \texttt{faster\_log} a \texttt{CRITICAL} label, with a substantially higher trust cost ($75$) and elevated distrust cost ($80$). The change in authorship (the author had no other crates) and side effects leads to a different trust and distrust cost.

Our second case is \texttt{serde\_yml}, a fork of \texttt{serde\_yaml} (the latter authored by David Tolnay, a well-established and trusted developer in the Rust community). After the original crate was marked unmaintained, the fork was published on \texttt{crates.io} and had accumulated over 1{,}057{,}000 downloads by March~19,~2025~\cite{tolnay2024impression}. The fork incorporates a substantial amount of AI-generated code, including low-quality ``AI slop'' changes that were later criticized by Tolnay~\cite{tolnay2024impression}. As shown in Table~\ref{tab:trust_summary}, the original crate receives a \texttt{SAFE} label, while the fork is classified with the higher \texttt{LOW\_SEVERITY}, primarily due to the change in authorship. The distrust costs are similar across the two crates since no identifiable malicious code or side effects were introduced.

\findingsbox{RQ$_{\textcolor{red}{2}}$}{
\tool{} assigns higher severity labels to the malicious typosquat \texttt{faster\_log} and AI-maintained fork \texttt{serde\_yml}, relative to their original crates.
\textbf{\mbox{Implication:}} Users interacting with \tool{} may perceive these crates as risky.
}

\subsection{RQ\texorpdfstring{$_{\textcolor{red}{3}}$}{3}: How does \tool{} output fare on known vulnerable crates?}
\label{sec:rq3}

To address RQ3, we compare \tool{}'s severity labels with those from RustSec~\cite{rustsec}, a well-known database of security advisories for Rust crates. RustSec classifies crates by severity levels: critical, high, medium, low, and informational. We analyzed 592 unique crate-version pairs listed in RustSec.\footnote{This number is as of March 25, 2025.}
For this experiment, all negative assumptions related to RustSec were removed to
determine what labels would have been assigned to vulnerable crates prior to them being known to be vulnerable.

\Cref{fig:rustsec_distribution} illustrates the distribution of severity labels for crates listed on RustSec. The x-axis shows the severity labels assigned by RustSec, and each bar is split according to the labels assigned by \tool{}.
While some crates receive a higher severity label, most crates are classified as \texttt{SAFE} or \texttt{LOW\_SEVERITY}. For example, these include popular crates (hence, more likely to be marked \texttt{SAFE}) that are listed in RustSec with an \texttt{INFO} label due to being marked as unmaintained. \Cref{fig:rustsec_percentiles} confirms that this behavior is generally driven by the popularity of RustSec-listed crates: approximately 75\% of RustSec crates belong to the top 10\% of \texttt{crates.io} by downloads, and only around 15\% of RustSec crates are found in the bottom 80\%.

\begin{figure}[!t]
  \centering

  \begin{subfigure}[t]{\columnwidth}
    \centering
    \includegraphics[width=\columnwidth]{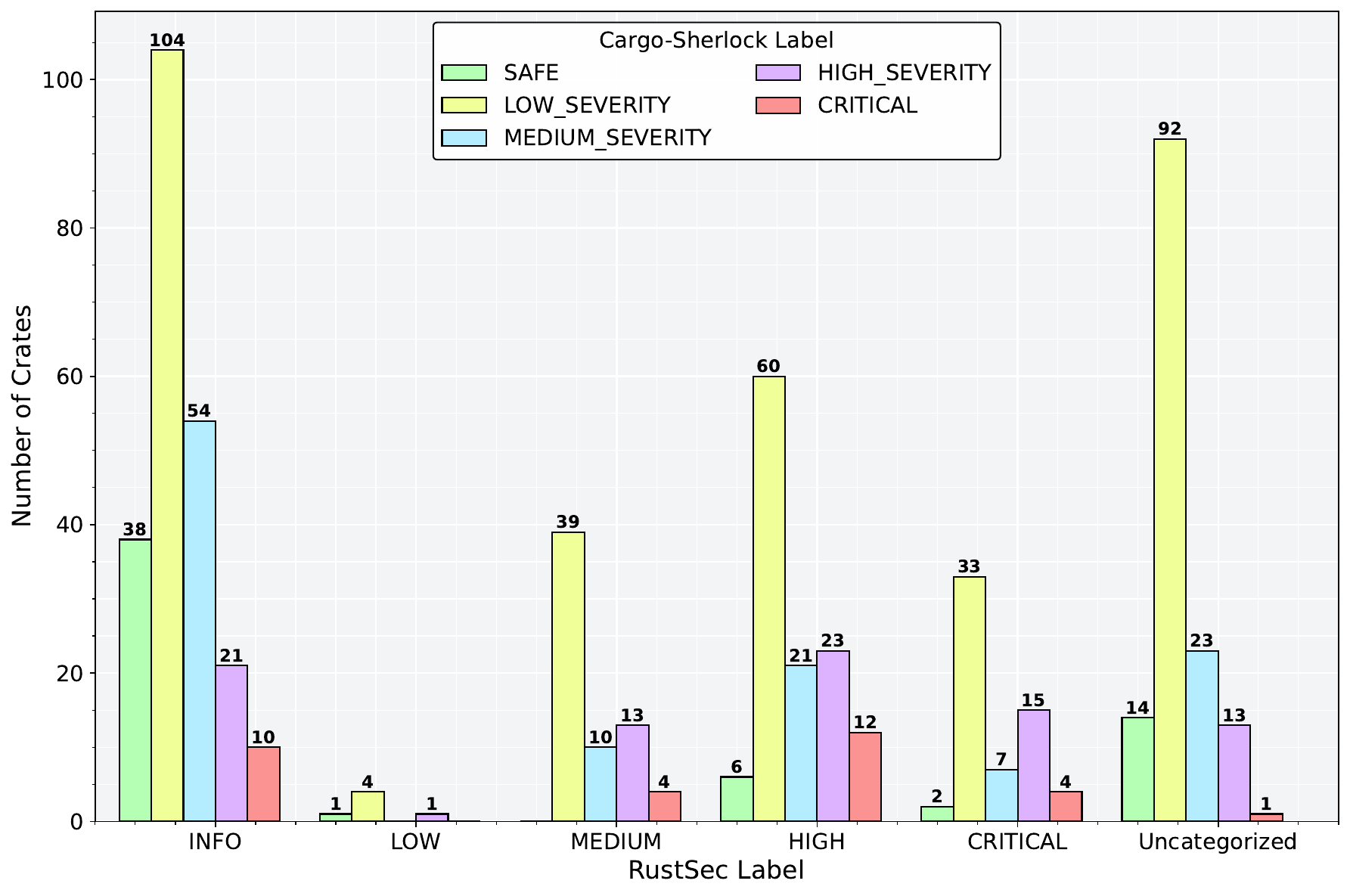}
    \caption{Distribution of \tool{} vs. RustSec severity labels.
    }
    \Description{A bar chart showing the distribution of \tool{} versus RustSec severity labels on the crates listed in RustSec.}
    \label{fig:rustsec_distribution}
  \end{subfigure}

  \medskip

  \begin{subfigure}[t]{\columnwidth}
    \centering
    \includegraphics[width=\columnwidth]{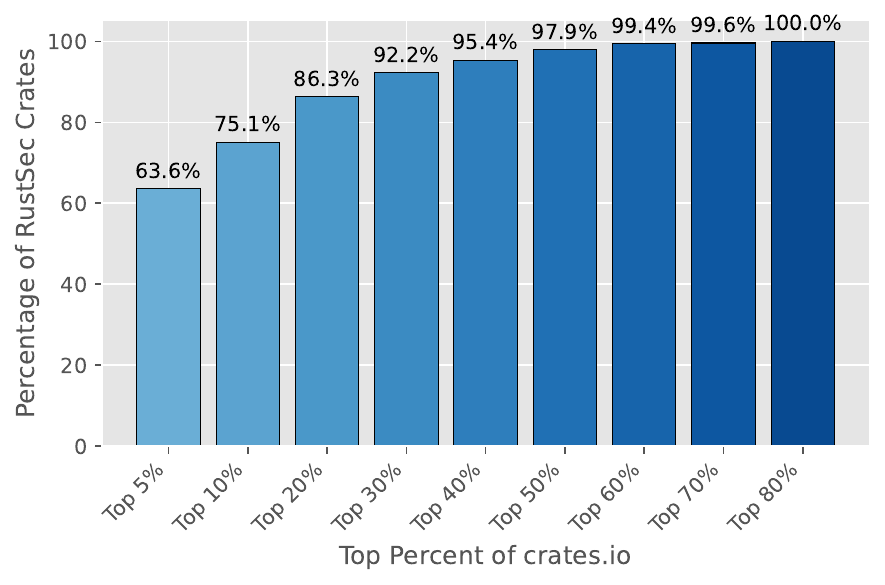}
    \caption{Distribution of crates in RustSec by downloads.
    }
    \Description{A bar chart showing the percentile distribution of crates in RustSec by downloads on crates.io.}
    \label{fig:rustsec_percentiles}
  \end{subfigure}

  \caption{\Cref{sec:rq3} experimental results.}
  \Description{RQ3 results plots.}
  \label{fig:rustsec_comparison}
\end{figure}

\findingsbox{RQ$_{\textcolor{red}{3}}$}{ Crates that are marked \texttt{SAFE} by Cargo Sherlock are not necessarily free from vulnerabilities, especially in cases with a high number of downloads.
\textbf{Implication:} Users interacting with \tool{} might perceive some vulnerable crates as safe.}

\subsection{RQ\texorpdfstring{$_{\textcolor{red}{4}}$}{4}: How does the performance of \texorpdfstring{\Cref{alg:mintrust,alg:horn-mintrust}}{Algorithms 1 and 2} scale with the number of dependencies?}
\label{sec:rq4}

To address RQ4, we sampled 1,000 random crates from \texttt{crates.io}. We excluded 47 crates from our evaluation because all of their versions had been yanked, leaving 953 crates.
To isolate algorithmic running time, we excluded data collection time from all measurements. The cumulative distribution function (CDF) plot in \Cref{fig:cdf} shows the runtime for both algorithms, using a timeout of 600 seconds. \Cref{alg:mintrust} successfully evaluates 298 crates and timed out on 500 crates. It crashes on 155 crates, out of which 150 crashes are due to the system's maximum recursion depth being exceeded. The remaining 5 crashes are shared by both algorithms and are due to parsing failures. \Cref{alg:horn-mintrust} successfully evaluated 900 crates and only timed out on 48 crates.

\begin{figure}[!t]
  \centering

  \begin{subfigure}[t]{\columnwidth}
    \centering
    \includegraphics[width=\columnwidth]{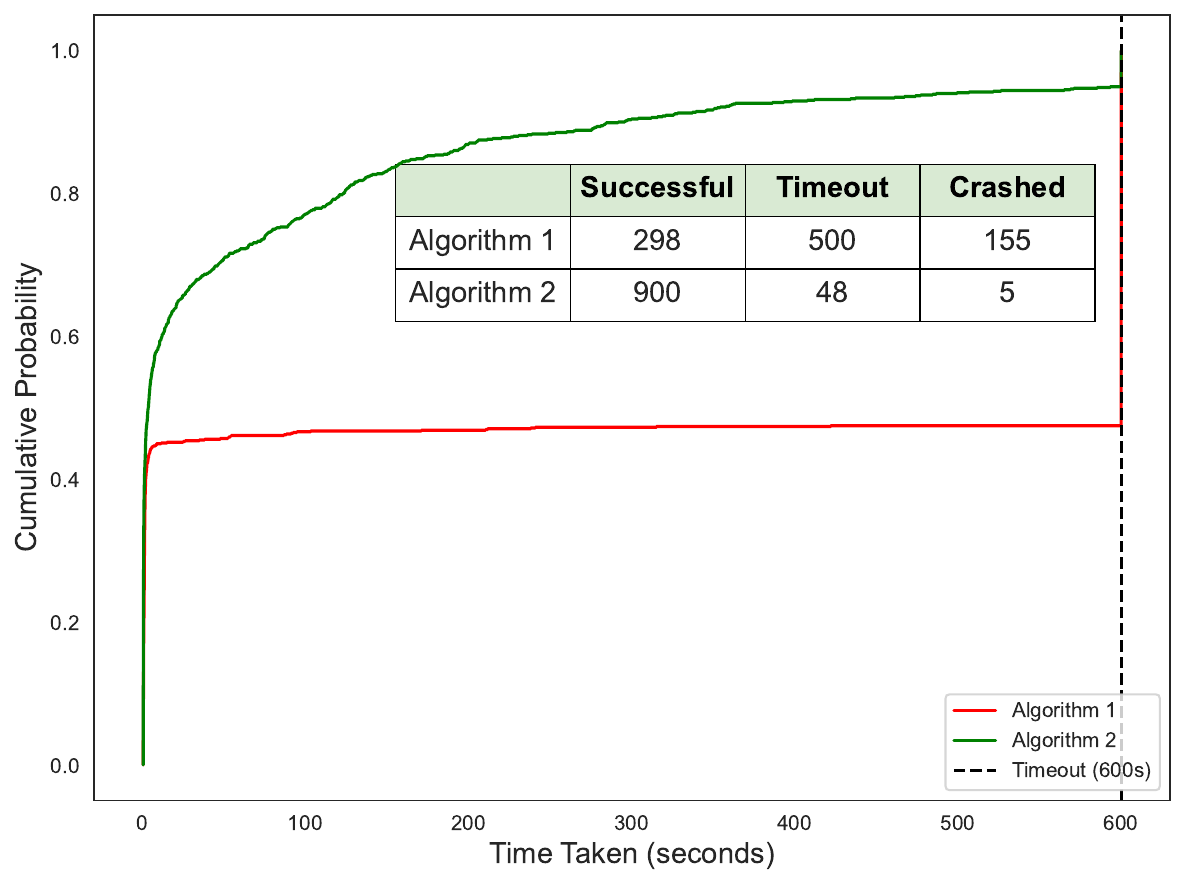}
    \caption{CDF of running time for \Cref{alg:mintrust} vs. \Cref{alg:horn-mintrust}.}
    \Description{A cumulative distribution function (CDF) plot showing the running time comparison between \Cref{alg:mintrust} and \Cref{alg:horn-mintrust} on 1000 random crates.}
    \label{fig:cdf}
  \end{subfigure}

  \medskip

  \begin{subfigure}[t]{\columnwidth}
    \centering
    \includegraphics[width=\columnwidth]{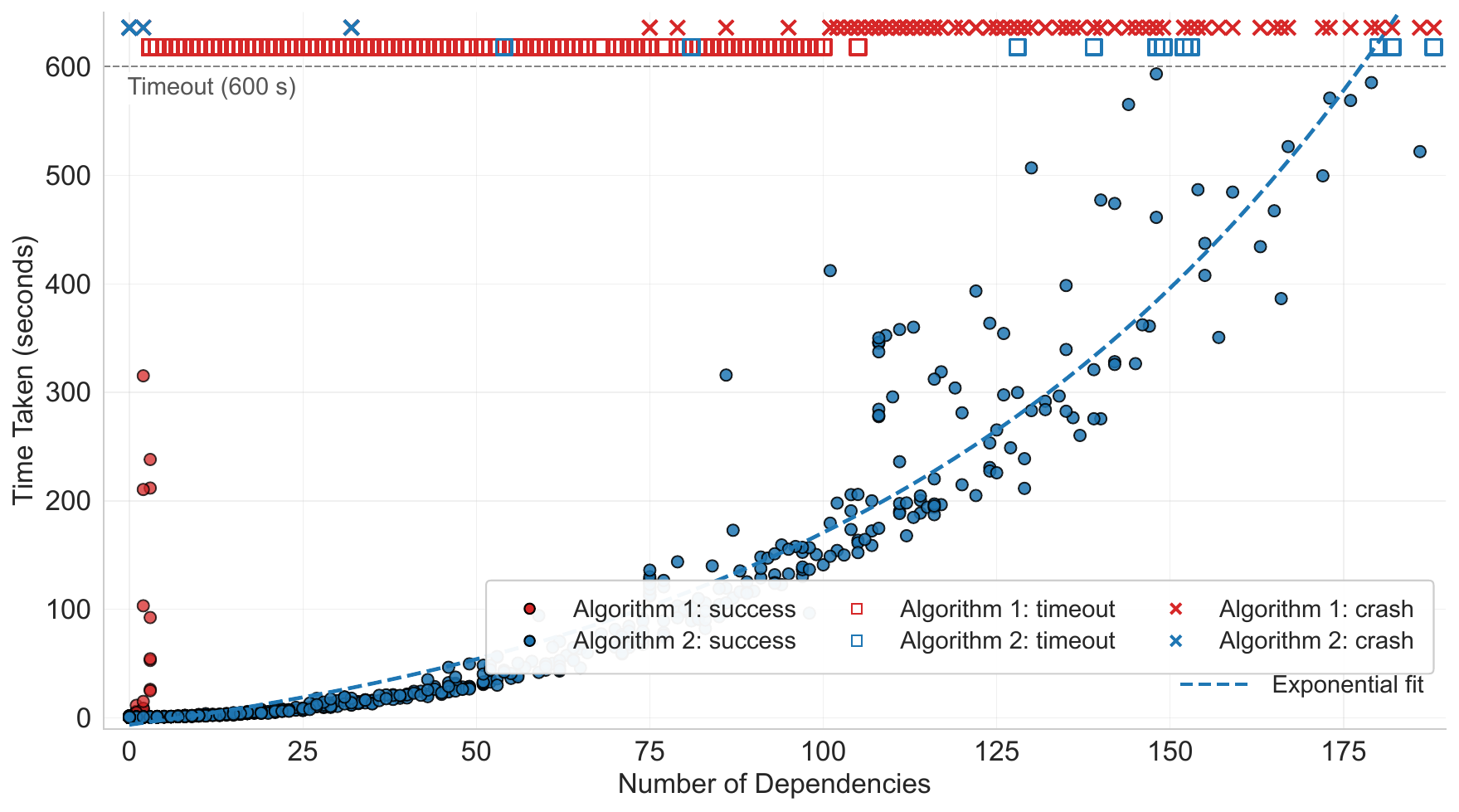}
    \caption{Time taken vs.\ number of dependencies.
    \Description{A scatter plot showing the time taken by \Cref{alg:mintrust} and \Cref{alg:horn-mintrust} relative to the number of dependencies in 1000 random crates.}
    }
    \label{fig:scatter}
  \end{subfigure}
  \caption{\Cref{sec:rq4} scalability results.}
  \Description{RQ4 results plots.}
  \label{fig:time_comparison}

  \bigskip

\end{figure}

The scatter plot (\Cref{fig:scatter}) shows that \Cref{alg:mintrust} rapidly becomes infeasible as the number of dependencies increases, with most evaluations timing out beyond approximately 10 dependencies. Conversely, \Cref{alg:horn-mintrust} scales better, successfully evaluating crates with up to 150 dependencies. A regression fit suggests that the scaling of the Horn clause algorithm is exponential. This is expected, as \code{unfold} in \Cref{alg:horn-mintrust} generates an exponential DNF formula in the worst case.

\findingsbox{RQ$_{\textcolor{red}{4}}$}{\Cref{alg:mintrust} becomes infeasible after 10 dependencies, while \Cref{alg:horn-mintrust} scales to at least 150 dependencies. \textbf{Implication:} Users can run \tool{} on crates with a large number of dependencies and receive results within 10 minutes.}

\balance
\section{Related Work}
\label{sec:rw}

\paragraph*{Supply chain security:}
Software supply chain attacks~\cite{ohm2020backstabber,wang2021feasibility,ohm2023sok} such as the SolarWinds attack~\cite{martinez2021software} and XZ Utils backdoor~\cite{helpnetsecurityBewareBackdoor,openwallOsssecurityBackdoor} have received attention from researchers and industry.
Tools like Crev~\cite{crev} and JFrog~\cite{jfrogClosingSupply} provide support for auditing dependencies and tracking associated metadata.
Software bill of materials (SBOM)~\cite{muiri2019framing,hendrick2022state,xia2023empirical,zahan2023software}
is a general set of techniques for providing an overview of software contents, which can be used for human understanding;
these are similar to the relational model used in our work.
Tools have also been built for specific languages such as JavaScript~\cite{zimmermann2019small,cogo2019empirical,zahan2022weak}
and Go~\cite{google-capslock,google-capslock-blog}.
Especially relevant to our work, a few techniques in this space have used formal methods:
Koscinski and Mirakhorli~\cite{formally-modeled-cwes} propose a technique to formally model Common Weakness Enumerations (CWEs) using an encoding in the Alloy specification language.
Software configuration management (SCM)~\cite{zeller1997unified} is also related; it aims to identify a configuration of dependencies that satisfies formal constraints.
Our work can be thought of as a similar configuration problem to optimize for trust cost,
and selecting among multiple possible configurations (different versions of different software packages) would be an interesting direction for future work.

\paragraph*{Safety and security for Rust code:}
RustSec~\cite{rustsec} and Lib.rs~\cite{lib-rs} provide helpful databases of known vulnerabilities and libraries.
Cargo Vet~\cite{mozilla2024cargovet}, Cargo Audit~\cite{cargo-audit}, Cargo Crev~\cite{cargo-crev}, and Cargo Scan~\cite{cargo-scan,cargo-scan-paper} (which inspired the side effects metrics used in our tool)
are tools to assist with human auditing for Rust code.
Formal verification for Rust includes work on foundations --- like RustBelt~\cite{rustbelt}, Oxide~\cite{oxide}, BoCa~\cite{wagner2025linearity}, Stacked/Tree Borrows~\cite{stacked-borrows,tree-borrows}, and others~\cite{grannan2025place} --- and tools such as Verus~\cite{verus}, Prusti~\cite{prusti}, Kani~\cite{kani}, and Flux~\cite{flux}.
Program analysis tools like MIRChecker~\cite{li2021mirchecker}, Miri~\cite{miri}, and Rudra~\cite{bae2021rudra} detect memory safety and undefined behaviors in Rust code, and fuzzing tools such as SyRust~\cite{takashima2021syrust} and RULF~\cite{jiang2021rulf} use automated testing to identify bugs.
Sandboxing tools (e.g.,~\cite{trust,sandcrust,pkru-safe,xrust,fidelius-charm,galeed,gulmez2023friend}) typically aim to protect Rust code from its \emph{unsafe} subset.
However, even the \emph{safe} subset of Rust can be exploited, and this is sometimes missed by prior tools~\cite{hassnain2024counterexamples,cve-rs,cargo-scan-paper},
which motivates work incorporating human factors.

\paragraph*{Trust in software engineering:}
Hamer et al.~\cite{hamer2025trusting} explored the impact of contributor reputation and other indicator metrics in ecosystems like Rust and GitHub. Boughton et al.~\cite{boughton2024decomposing} proposed a framework to decompose and quantify trust within open-source software. We used these studies to motivate the specific assumptions that we support in our minimum trust cost model.
Devanbu et al.~\cite{devanbu1998techniques} argue for a combination of trusted hardware, key management, and verification techniques to elevate trust. Toone et al.~\cite{toone2003trust} introduce trust mediation, a framework for managing complex distributed trust relationships. Outside of computer science, models of trust have been studied in other settings (e.g.,~\cite{mayer1995integrative}).

\begin{acks}
This research was supported in part by
the US National Science Foundation under award CCF \#2327338.
The authors would like to thank the anonymous reviewers and Prem Devanbu for feedback
leading to improvements to the paper, and Audrey Gobaco for helpful discussions and work on related projects.
\end{acks}

\newpage

\balance

\bibliographystyle{ACM-Reference-Format}
\bibliography{ref}

\end{document}